\documentclass[a4paper,11pt]{article}
\pdfoutput=1
\usepackage{jcappub}
\usepackage[T1]{fontenc}
\usepackage[utf8]{inputenc}
\usepackage{amsmath,amssymb,amsthm}
\usepackage{graphicx}
\usepackage{booktabs}
\usepackage{hyperref}
\usepackage{natbib}
\usepackage{url}
\usepackage{aas_macros}
\newtheorem{definition}{Definition}
\newtheorem{theorem}{Theorem}
\newtheorem{corollary}{Corollary}
\newtheorem{proposition}{Proposition}
\newtheorem{remark}{Remark}
\newcommand{\E}{\mathbb{E}}
\newcommand{\Prob}{\mathbb{P}}
\newcommand{\LCDM}{$\Lambda$CDM}
\usepackage{xcolor}

\title{A Sequentially-Valid Reanalysis of DESI's Dynamical Dark Energy Signal}
\author[a,d]{Jinyoung Kim,\note{Corresponding author. This work is independent research; the views expressed do not represent those of the author's employer.}}
\author[b]{David F. Mota,}
\author[b,c]{and Andrius Tamosiunas}
\affiliation[a]{Department of Computer Science, Stanford University, 353 Jane Stanford Way, Stanford, CA 94305, United States}
\affiliation[b]{Institute of Theoretical Astrophysics, Universitetet i Oslo, 0315 Oslo, Norway}
\affiliation[c]{CERCA/ISO, Department of Physics, Case Western Reserve University, Cleveland, OH 44106, United States}
\affiliation[d]{Two Sigma Investments, 100 Avenue of the Americas, New York, NY 10013, United States}
\emailAdd{jinkim2@stanford.edu}
\emailAdd{d.f.mota@astro.uio.no}
\emailAdd{andrius.tamosiunas@astro.uio.no}
\abstract{
Cosmological surveys such as DESI release their data in stages, and each release invites a new assessment of whether dark energy is consistent with a cosmological constant. Standard significance estimates, such as Wilks-based $\sigma$ values, are usually interpreted as if the test were performed only once. However, when the same question is revisited after DR1, DR2, DR3, and future releases, repeated testing can make apparently significant fluctuations more likely unless the analysis accounts for the sequential nature of the data.
We reanalyse the DESI BAO evidence for dynamical dark energy using an e-process, a likelihood-ratio-based measure of evidence designed to remain valid under repeated looks at accumulating data.  The method controls the probability of a false detection even if one continues testing after future releases, without requiring an additional trials-factor penalty for the number of releases.
Applied to the DESI DR1$\rightarrow$DR2 BAO sequence, the result depends strongly on which departures from $\Lambda$CDM the test is designed to detect. For a pre-specified alternative aligned with the DR2-preferred direction, the running evidence reaches $M_{\mathrm{DR2}} = 33.97$, crossing the threshold of $20$ that corresponds to an illustrative $5\%$ false-detection rate and which we use throughout. However, evidence against the null is concentrated almost entirely in the LRG2 redshift bin and removing that bin reduces the running e-value to $M=0.49$, mildly favouring $\Lambda$CDM. Furthermore, when the analysis allows for any of the seven DESI bins to have possibly produced the excess, the signal does not survive a look-elsewhere correction. Other physically agnostic alternatives also fail to reject $\Lambda$CDM, while a physically motivated thawing-quintessence alternative rejects more strongly than any agnostic choice.
The concentration of evidence in LRG2 specifically is difficult to reconcile with a smoothly evolving equation of state. The data therefore point to a fragile, single-bin, specification-dependent signal rather than robust evidence for dynamical dark energy. We recommend that future DESI, Euclid, Roman, and LSST data releases report an anytime-valid e-process value, with its test specification stated, alongside conventional $\sigma$ significances. This would make sequential evidence accumulation transparent while preserving standard control of false positives across future survey releases.}

\keywords{dark energy theory, baryon acoustic oscillations, cosmological parameters from LSS}
\begin{document}

\maketitle
\raggedbottom

\section{Introduction}
\label{sec:intro}
The standard cosmological model, $\Lambda$ cold dark matter (\LCDM), describes a spatially flat universe whose late-time accelerated expansion is described by cosmological constant $\Lambda$ and equation-of-state parameter $w = -1$. Over the past two decades, six free parameters have been enough for \LCDM{} to fit cosmic microwave background (CMB) anisotropies \cite{Planck2020, Louis2025}, type Ia supernovae \cite{Riess1998, Perlmutter1999}, weak lensing \cite{Bacon2000, Kilbinger2015}, galaxy clustering \cite{Peacock2001, Tegmark2004}, and baryon acoustic oscillations (BAO) \cite{Eisenstein2005, Anderson2014}. However, the same datasets have begun to expose tensions the model is increasingly hard pressed to absorb. Prominently, there is a $\sim 5\sigma$ discrepancy between local distance-ladder and CMB-inferred determinations of the Hubble constant $H_0$ \cite{DiValentino2021, DiValentino2025}, and a persistent $S_8$ tension \cite{Poulin2023, Pantos2026} between primary CMB and weak-lensing measurements of late-time clustering amplitude. Whether these tensions reflect systematics, statistical fluctuations, or new physics is unresolved, and a large body of literature has explored modified-gravity, interacting-dark-sector, and early-dark-energy extensions.

The Dark Energy Spectroscopic Instrument (DESI) has produced the most precise BAO measurements to date. Combined with CMB and type~Ia supernova data, the second data release (DR2) reports a $\sim 3$--$4\sigma$ frequentist preference for a time-evolving dark energy equation of state ($w_0w_a$CDM) over the cosmological constant \citep{DESI2025}; the figure ranges from $\sim 2.8\sigma$ to $4.2\sigma$ depending on which type~Ia supernova compilation (Pantheon+, Union3, DES-Y5) is added to the BAO$+$CMB data. Taken at face value, this would be the first statistically significant evidence for dynamical dark energy and a substantive departure from \LCDM. However, several recent analyses using complementary statistical approaches have argued the significance is more fragile \cite{Cortes2024, WangMota2025, Colgain2025, Efstathiou2025, Ormondroyd2026, Gialamas2025, Li:2026asg}. Nested-sampling model comparison favours \LCDM, with $\ln\mathcal{B} = -0.57 \pm 0.26$~\citep{OngBayesian2025}, and tension diagnostics across CMB, DESI, and the SN compilations point the same way~\citep{WangMota2025}. Part of the reason is that the signal is not spread across the survey. A statistical fluctuation in LRG1 was flagged immediately after DR1~\citep{ColgainDESI2024}; removing the LRG1/LRG2 bins, in combination with CMB and SN data, erases the DR1 preference~\citep{Liu2024LRG, Wang2024DR1}, and in DR2 the $w_0 > -1$ preference is driven mainly by LRG2~\citep{ColgainEvolved2025}, which echoes DESI's own LRG2/SDSS tension ($\sim 3\sigma$ in DR1, $1.5$--$2.6\sigma$ in DR2). Full-shape modelling of the same data shows no such preference~\citep{ColgainAnalysis2025}. The Chevallier--Polarski--Linder (CPL) form $w(a) = w_0 + w_a(1-a)$~\citep{Chevallier2001, Linder2003} matters here too: it is one expansion among many, and the preference shifts with the choice~\citep{Nesseris2025, Artola2026, Alam2026, Lee2026}. The reported significance may therefore be driven in part by parameter-level tensions between the datasets combined to produce it rather than by a coherent physical signal.

There is a second, more methodological worry. The issue is not that any individual DESI release has been analysed incorrectly, but that the interpretation of a quoted significance changes when the same question is asked repeatedly as more data arrive. Cosmology has more than once seen a statistically significant anomaly fade with further data or a full accounting of how many times a hypothesis was revisited: BICEP2's primordial $B$-mode claim~\citep{BICEP2014} was later attributed to Galactic dust~\citep{BKP2015}, and the EDGES 21-cm absorption feature~\citep{Bowman2018} remains unconfirmed~\citep{Singh2022}. The risk is sharpest when many groups mine the same dataset for signals and more data releases are still to come. Each DESI data release reports a $\sigma$ value and so reruns the same hypothesis test, increasing false-positive rates without a pre-committed stopping rule. The standard remedy is the Bonferroni correction, which holds the overall false-positive rate at $\alpha$ by tightening the per-look threshold in proportion to the number of looks, and so demands an ever-higher per-look $\sigma$ as releases accumulate (two-sided at overall $\alpha = 0.05$: $1.96\sigma$ for a single look, $2.24\sigma$ at two looks, $2.58\sigma$ at five).\footnote{Bonferroni is worst-case by construction: it assumes the looks are independent, so that $n$ looks each at level $\alpha$ can accumulate a total false-positive rate of $n\alpha$, here $0.10$ for two looks at $\alpha = 0.05$. DESI's looks are not independent, since DR1 is a subset of DR2, so the true joint rate ($\approx 0.08$) sits below that bound. The cost nonetheless accumulates with each further release.} DR2's $\sigma = 3.70$ (the BAO-only value at fixed background; Section~\ref{sec:eprocess}) clears this bar today, but a noisier future result around $\sigma \sim 2.4$ at DR4 or DR5 would not, even though it would look superficially significant. Bayes-factor analyses~\citep{OngBayesian2025, WangMota2025} sidestep the sequential-testing problem but, as conventionally reported (without the e-value calibration introduced below), carry no bound on the false-positive rate and are often strongly prior-sensitive.

Here we apply the \emph{e-process}~\citep{Vovk2021, Shafer2021, Ramdas2023, GrunwaldBeyondNeymanPearson, GrunwaldSafeTesting2024, Howard2021} to the DESI BAO data. The construction (defined in Section~\ref{sec:evalues}) is the running mixture e-value
\begin{equation}
M_t \;=\; \frac{\int L(\mathcal D_{\le t}\mid\theta)\,\pi(\theta)\,d\theta}{L(\mathcal D_{\le t}\mid H_0)},
\label{eq:Mt}
\end{equation}
where $\mathcal D_{\le t}$ is the data observed by DESI release $t$ and $\pi$ is a fixed test specification on $\theta = (w_0, w_a)$. Under $H_0$, each $M_t$ has expectation one, and once the overlap between nested releases is modelled explicitly the sequence extends to an exact martingale, so its running maximum obeys a time-uniform bound at any stopping time and for any number of looks (Section~\ref{sec:eprocess}). We adopt an illustrative level $\alpha = 0.05$ throughout, so that an e-value of $1/\alpha = 20$ or more rejects. Under our Default specification $\{M_t\}$ crosses $20$ over the DR1$\to$DR2 sequence, and Section~\ref{sec:results} gives the running values.

The whole DR1$\to$DR2 sequence can therefore be tested at once, and DR3 and beyond at no further cost (Figure~\ref{fig:multiplicity}). Since DESI, Euclid, Roman, and LSST will all produce sequential data products, there is a case for a measure of evidence whose interpretation stays valid over the lifetime of the survey, and we recommend that sequential surveys report one alongside conventional significances. This paper is such a reanalysis of the recent dynamical dark energy results from DESI, giving one running measure of evidence for the whole DR1$\to$DR2 history, stress-tested against the specification choices on which it depends. To our knowledge this is the first application of anytime-valid sequential inference to a cosmological dataset, and the treatment of nested survey releases in Appendix~\ref{app:hierarchical} is new. The running value crosses the threshold only under a pre-specified alternative concentrated near the data's preferred direction and shared coherently across the redshift bins; widening the prior, or letting each bin take its own $(w_0, w_a)$ and correcting for the look-elsewhere effect, removes the rejection (Section~\ref{sec:loo}).

Section~\ref{sec:evalues} builds the e-process and the running mixture $M_t$; Section~\ref{sec:methods} sets out the DESI DR1$\to$DR2 BAO likelihood and the prior families; Section~\ref{sec:results} reports the running values and stress-tests them against bin removal, prior width, and the look-elsewhere effect; and Section~\ref{sec:discussion} draws out the implications for significance reporting in multi-release surveys. Appendix~\ref{app:proofs} collects the formal e-value propositions and proofs, and Appendix~\ref{app:methods-notes} the supporting methodology and robustness checks.

\section{An e-process for sequential model selection}
\label{sec:evalues}
For readers more familiar with likelihood ratios, e-values
can be thought of as a running likelihood-ratio evidence
measure. At each data release, it compares how well the accumulated BAO data
are described, on average, by a specified class of $w_0w_a$CDM models relative to
$\Lambda$CDM. The crucial difference from a usual likelihood-ratio significance
is that the threshold for claiming evidence is valid even if the analyst looks
after DR1, looks again after DR2, and continues looking after later releases.

Let $H_0$ denote the null (\LCDM, $w = -1$) and $H_1$ the alternative ($w_0w_a$CDM, $w(a) = w_0 + w_a(1-a)$). An e-value is a non-negative random variable $E$ whose expectation under the null is at most one, $\E[E \mid H_0] \leq 1$. A large observed value is then unlikely under $H_0$ and counts as evidence against it; the letter ``e'' is for both the \emph{expectation} in the definition and the \emph{evidence} the statistic quantifies~\cite{Vovk2021}. From this construction, Markov's inequality gives $\Prob(E \geq 1/\alpha \mid H_0) \leq \alpha$, meaning rejecting $H_0$ at $E \geq 1/\alpha$ is Type-I-controlled at $\alpha$. The Type-I bound needs no Wilks/asymptotic approximation, but it still assumes the Gaussian BAO likelihood with the published covariance, the same model the $\chi^2$ analysis uses. Like the Bayes factor, it quantifies relative evidence, and like a $p$-value, it controls Type-I error. Unlike $p$-values, e-values from independent data multiply, and the resulting product process survives \textit{optional stopping}: an analyst can stop testing after seeing the data without inflating the false-positive rate.

The construction we use is the \emph{mixture e-value}. We fix a probability measure $\pi$ on the alternative parameters $(w_0, w_a)$ before seeing the data and average the likelihood ratio over it,
\begin{equation}
E_{\mathrm{mix}} \;=\; \int \frac{L(\mathrm{data} \mid \theta)}{L(\mathrm{data} \mid H_0)} \, \pi(\theta) \, d\theta,
\label{eq:Emix}
\end{equation}
which is exactly the Bayes factor with prior $\pi$. It is a valid e-value because each integrand $L(\mathrm{data}\mid\theta)/L(\mathrm{data}\mid H_0)$ is a simple-versus-simple likelihood ratio with null expectation one, and averaging over a proper $\pi$ preserves that expectation, so $\E[E_{\mathrm{mix}} \mid H_0] = 1$ (Proposition~\ref{prop:mixture}). Averaging rather than maximising keeps the construction valid, which is why one cannot read a maximised $\chi^2$ likelihood ratio (Appendix~\ref{app:proofs}) or $e^{\Delta\chi^2/2}$ as evidence odds or carry it through optional stopping.

For our default analysis $\pi$ is a flat prior on a $30 \times 30$ grid in $[-1.5, -0.5] \times [-2, 1]$. We size the box to be wide enough to contain the published joint SN+CMB maximum a posteriori (MAP) points of Pantheon+, DES-Y5, and Union3~\cite{BroutPantheon2022,DESY5_2024,RubinUnion3_2023} and the physically plausible thawing/freezing region, but narrower than the diffuse reference box of~\cite{OngBayesian2025}. Alongside, we run several other test specifications (Section~\ref{sec:results}), each stressing the rejection from a different direction. Box-width variants (Narrow/Wide/Ong) probe how much prior mass sits away from the data's preferred direction, REGROW priors swap a box boundary for a pre-specified minimum effect size, and physically-motivated thawing/freezing priors replace the reference boxes with dark-energy-model predictions. Here GROW (growth-rate optimal) denotes the prior maximising the worst-case expected log-e-value over the restricted alternative $\{\theta : \mathrm{KL}(\theta, \theta_0) \geq \delta\}$~\cite{GrunwaldSafeTesting2024}, i.e.\ the specification with the fastest guaranteed evidence growth against any alternative at least a Kullback--Leibler distance $\delta$ from \LCDM. The ``RE-'' marks the GROW prior \emph{re}stricted to that minimum-effect-size set, which in the local-Gaussian regime concentrates its mass on the Fisher-distance-$\delta$ ellipse around \LCDM. Appendix~\ref{app:proofs} gives the validity statements for these constructions.

We call $\pi$ the ``prior'' for compatibility with Bayes-factor literature. However, here $\pi$ is a frequentist test specification, or a choice of which alternative the test is designed to have power against, not a Bayesian belief about $(w_0, w_a)$. The default specification is our single pre-specified test (chosen before computing $M_t$). 

\section{Data and likelihood construction}
\label{sec:methods}

We use the official DESI DR2 BAO measurements from the CobayaSampler repository\footnote{\url{https://github.com/CobayaSampler/bao_data}.} which contains 13 measurements across 7 redshift bins (BGS at $z_{\text{eff}} = 0.295$, LRG1, LRG2, LRG3+ELG1, ELG2, QSO, Ly$\alpha$ up to $z_{\text{eff}} = 2.330$), with values taken directly from Table~IV of Ref.~\citep{DESI2025}. Each bin reports the transverse and radial BAO scales in units of the sound horizon at the drag epoch $r_d$: the comoving angular diameter distance $D_M(z)$ and the Hubble distance $D_H(z) = c/H(z)$, with the isotropic combination $D_V(z) = [z D_M^2 D_H]^{1/3}$ used where the two cannot be separated. The $13\times 13$ covariance matrix is block-diagonal by bin; within each bin, $D_M/r_d$ and $D_H/r_d$ are anti-correlated with $\rho \approx -0.35$ to $-0.49$ (Figure~\ref{fig:correlation}). DR1 enters through the corresponding 12-measurement vector and covariance from the same repository~\citep{DESI2024}; DR1 reports only $D_V/r_d$ for the QSO bin, which matters for the release-to-release matching of Appendix~\ref{app:hierarchical}. Figure~\ref{fig:bao_data} shows the data overlaid with the \LCDM\ and $w_0w_a$CDM predictions at the DR2 BAO maximum-likelihood estimate (MLE) at Planck-fixed background cosmology, $(w_0, w_a) = (-0.856, -0.430)$.

\begin{figure}
    \centering
    \includegraphics[width=\textwidth]{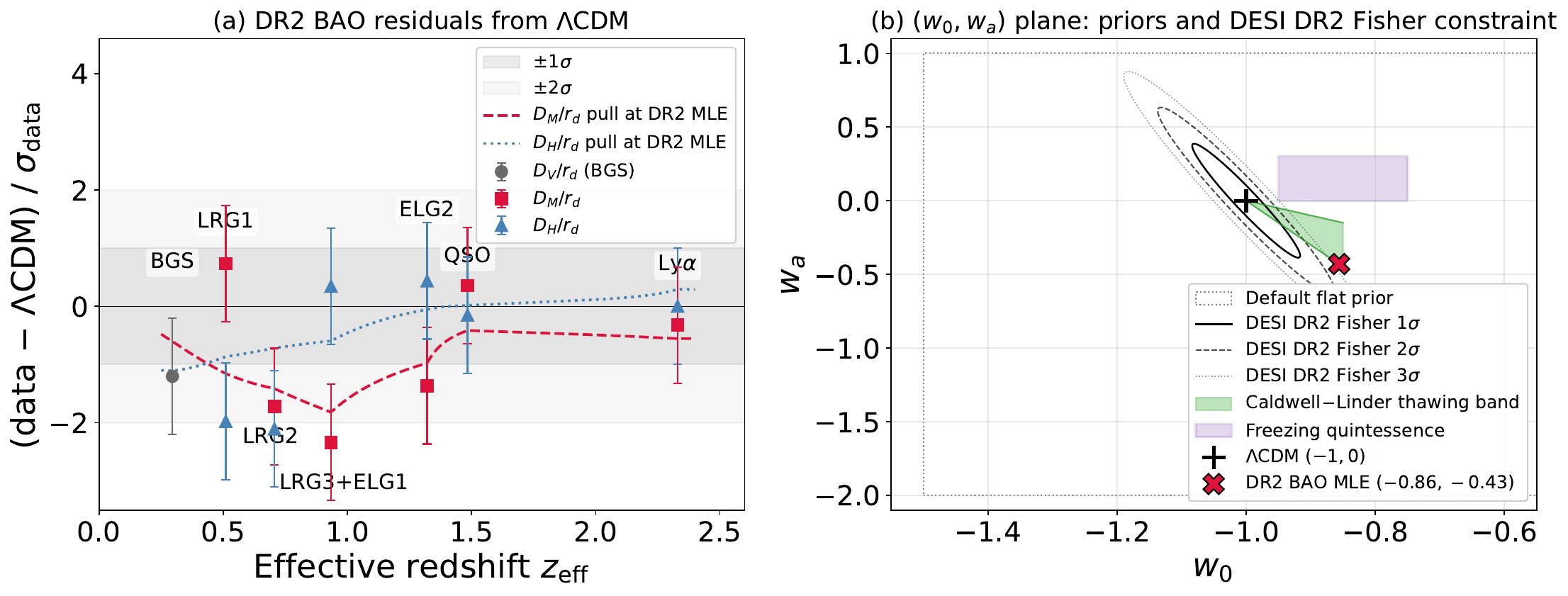}
    \caption{(a) DESI DR2 BAO residuals from the Planck-2018 \LCDM\ prediction $(h, \Omega_m, r_d) = (0.6766, 0.3111, 147.05~\mathrm{Mpc})$, in measurement-$\sigma$ units; the dashed ($D_M/r_d$) and dotted ($D_H/r_d$) curves are the $w_0w_a$CDM prediction at the DR2 BAO MLE $(w_0, w_a) = (-0.856, -0.430)$, not fits to the points. Here $D_M$ is the comoving angular diameter distance and $D_H = c/H(z)$ the Hubble distance, each measured in units of the drag-epoch sound horizon $r_d$. (b) The $(w_0, w_a)$ plane: DR2 Fisher ellipses around \LCDM\ ($1, 2, 3\sigma$), the BAO MLE, the Caldwell-Linder thawing-quintessence band (green), the freezing region (purple), and the Default flat prior (dotted box).}
    \label{fig:bao_data}
\end{figure}

\begin{figure}
    \centering
    \includegraphics[width=0.85\textwidth]{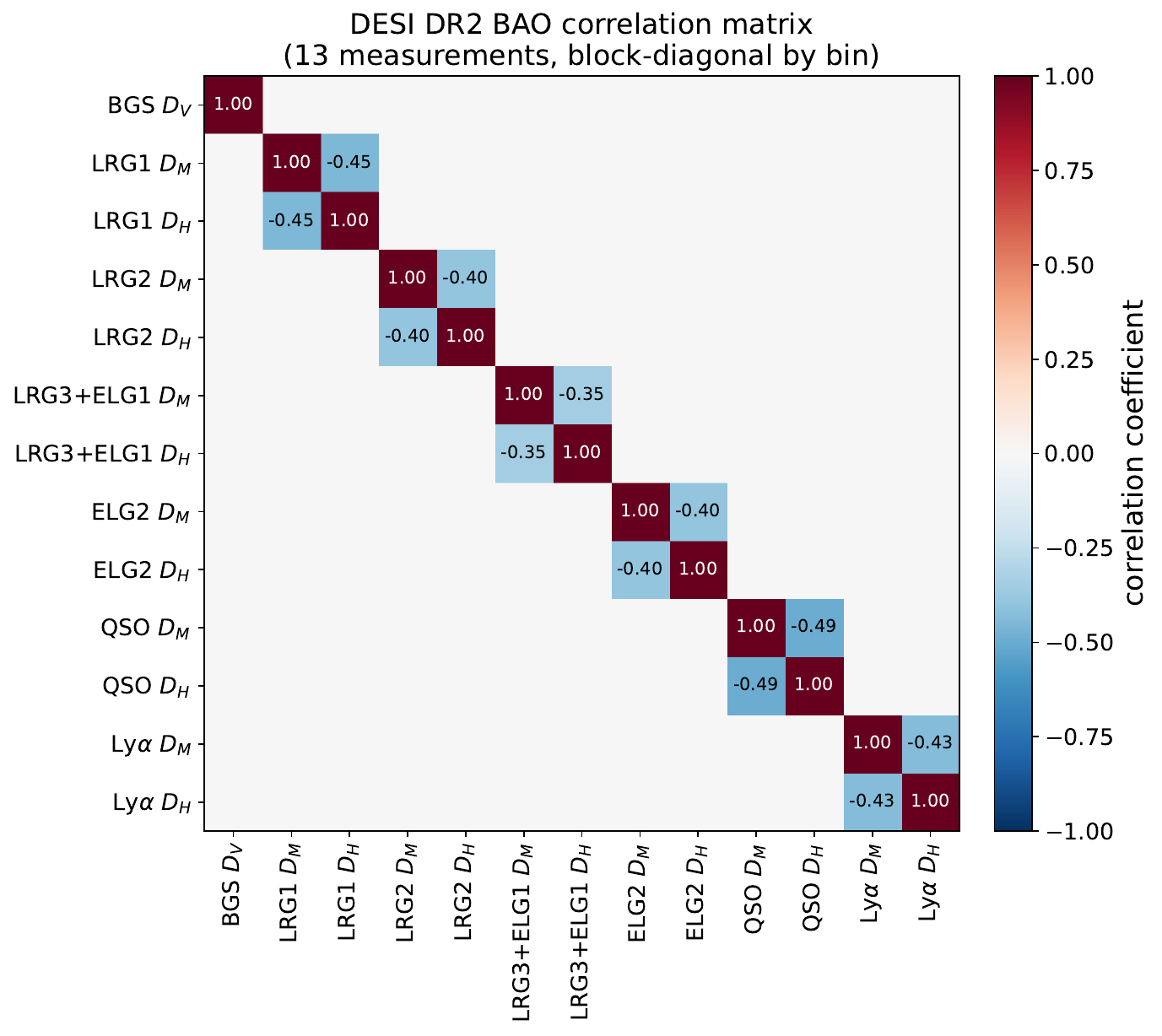}
    \caption{Correlation matrix of the 13 DESI DR2 BAO measurements derived from the published $13 \times 13$ covariance: block-diagonal by redshift bin (off-diagonal blocks zero), with within-bin $D_M/r_d$ and $D_H/r_d$ anti-correlated at $\rho \approx -0.35$ to $-0.49$.}
    \label{fig:correlation}
\end{figure}

We fix the background cosmology to 2018 Planck values, $h = 0.6766$, $\Omega_m = 0.3111$, $r_d = 147.05$~Mpc~\cite{Planck2020}, such that only $w_0$ and $w_a$ vary and the effective number of additional parameters is $k = 2$. Distances are then closed-form integrals, and the DR2 BAO-only MLE at this background is $(w_0, w_a) = (-0.856, -0.430)$, a standard choice for BAO-only analyses. Note, however, that this triple mixes Planck columns ($(h, \Omega_m)$ from the +lensing+BAO chain, $r_d$ from the CMB-only ones), and the e-values depend strongly on which column is used (Appendix~\ref{app:fixed-planck}).

Because the mixture $M_t$ uses the full likelihood, it is invariant to how the 13 measurements are grouped. The leave-one-out and per-bin-independent product constructions (Section~\ref{sec:loo}) are different because a valid average or product needs independence across units but the two points within a bin are anti-correlated. We therefore fold them at the bin level rather than the point level, matching the block-diagonal covariance. For the cross-dataset analysis we take each SN$+$CMB compilation's published best-fit and full $(w_0, w_a)$ posterior~\cite{BroutPantheon2022, DESY5_2024, RubinUnion3_2023} as a pre-specified alternative against DESI BAO.

\section{Results}
\label{sec:results}

Under the Default prior, the running mixture e-value is $M_{\mathrm{DR1}} = 1.05$ at DR1 and $M_{\mathrm{DR2}} = 33.97$ at DR2. Under this pre-specified, moderately concentrated alternative the running evidence rejects; Sections~\ref{sec:loo} and~\ref{sec:prior-sens} test how robust that rejection is.

Table~\ref{tab:summary} reports the running values under the seven test specifications introduced in Section~\ref{sec:evalues}. The rejection holds under the specifications concentrated near \LCDM\ (Narrow, Default, REGROW with $\delta \geq 2$) and fails under the wider ones (Wide, Ong et al., REGROW $\delta = 1$).

\begin{table}[t]
\centering
\caption{Running mixture e-values $M_t$ for DESI DR1$\to$DR2 BAO under seven test specifications on $(w_0, w_a)$, at the Planck-fixed background. A tick marks $M_{\mathrm{DR2}} \geq 20$, the rejection threshold. The Markov $p$-value is $1/M_{\mathrm{DR2}}$.}
\label{tab:summary}

\small
\begin{tabular}{@{}lcccc@{}}
\toprule
Prior on $(w_0,w_a)$ & $M_{\mathrm{DR1}}$ & $M_{\mathrm{DR2}}$ & Markov $p$-value & Reject \LCDM \\
\midrule
\multicolumn{5}{l}{\textit{Flat ``box'' priors:}}\\
\quad Narrow $[-1.2,-0.8]\times[-1.0,0.5]$ & 2.60 & 144 & 0.007 & $\checkmark$ \\
\quad Default $[-1.5,-0.5]\times[-2.0,1.0]$ & 1.05 & 33.97 & 0.029 & $\checkmark$ \\
\quad Wide $[-2.0,0.0]\times[-3.0,2.0]$ & 0.320 & 10.7 & 0.094 &  \\
\quad Ong et al.~$[-3.0,1.0]\times[-3.0,2.0]$ & 0.160 & 4.81 & 0.208 &  \\
\midrule
\multicolumn{5}{l}{\textit{REGROW (Fisher-$\delta$ ellipse around \LCDM):}}\\
\quad $\delta = 1$ Fisher-$\sigma$ & 1.50 & 7.58 & 0.132 &  \\
\quad $\delta = 2$ Fisher-$\sigma$ & 3.02 & 72.5 & 0.014 & $\checkmark$ \\
\quad $\delta = 3$ Fisher-$\sigma$ & 5.25 & 297 & 0.003 & $\checkmark$ \\
\bottomrule
\end{tabular}
\end{table}

\subsection{The anytime-valid e-process across \texorpdfstring{DR1$\to$DR2}{DR1 to DR2}}
\label{sec:eprocess}

\begin{figure}[t]
\centering
\includegraphics[width=\textwidth]{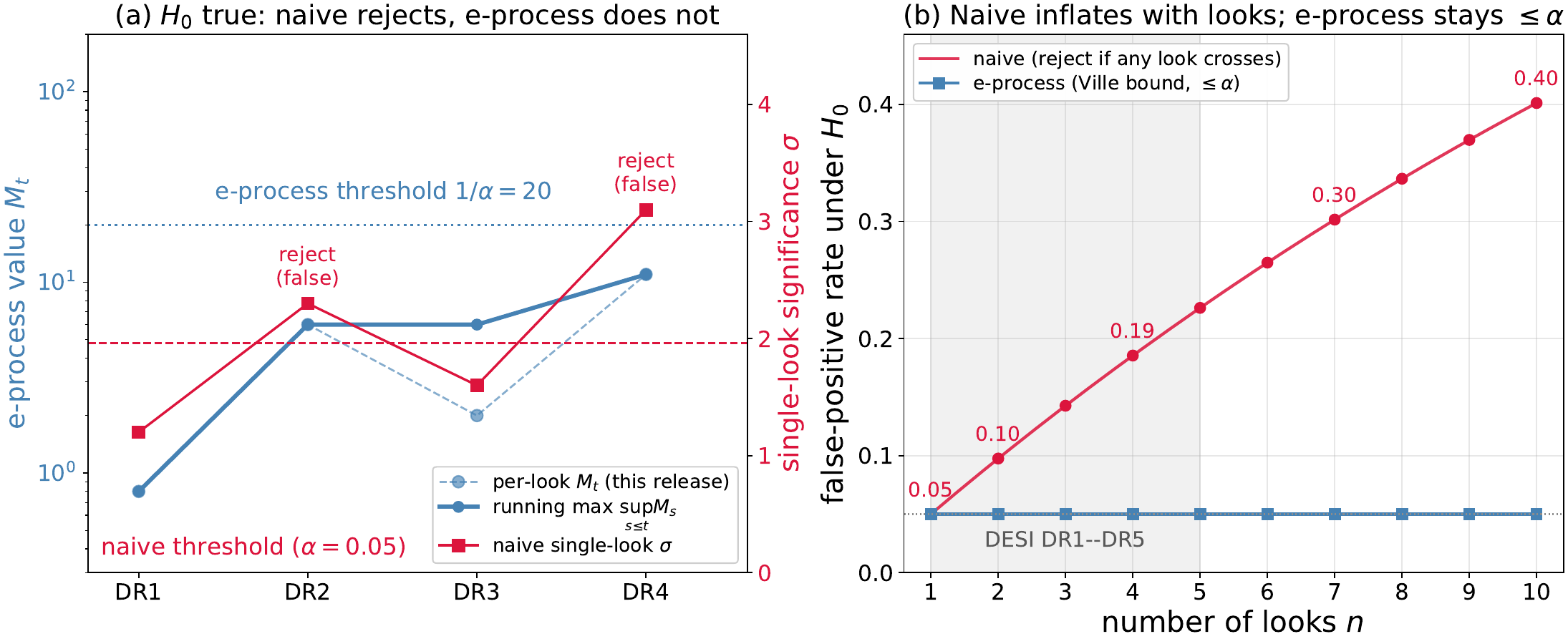}
\caption{Why anytime-validity matters across multiple looks. \emph{(a)} An illustrative sequence of four releases in a universe where \LCDM\ is true. The per-release values are chosen to make the point and are not DESI data or a DESI forecast; Section~\ref{sec:forecast} gives the actual DR3 projection. A fresh single-look Wilks significance (red) crosses the threshold at the second and fourth looks, falsely rejecting, while the e-process running maximum (blue) never reaches $20$ and does not reject. \emph{(b)} The naive ``reject if any look crosses'' rule inflates the false-positive rate as $1 - (1-\alpha)^n$ in the number of looks $n$, approaching certainty as looks accumulate, whereas the e-process stays at $\leq \alpha$ for every $n$ and every stopping rule.}
\label{fig:multiplicity}
\end{figure}

This section applies the construction of Section~\ref{sec:evalues} to the DESI data and resolves the one feature of the release sequence that the generic construction does not cover, the nesting of DR1 inside DR2. The quantity we track is the running supremum $\sup_t M_t$ of the e-process $\{M_t\}_{t \ge 1}$, indexed by DESI release $t$.

Each $M_t$ is a valid e-value at its own release ($\E[M_t \mid H_0] = 1$ for proper $\pi$, by Fubini's theorem). The sequential guarantee requires more: a martingale, a process whose conditional expectation cannot drift upward as releases accumulate. For such a process, Ville's inequality~\citep{Ville1939} bounds the running maximum over the whole sequence,
\begin{equation}
\Prob\!\left(\sup_{t\geq 1} M_t \geq 1/\alpha\,\Big|\,H_0\right)\;\leq\;\alpha,
\label{eq:ville-bound}
\end{equation}
the time-uniform analogue of Markov's inequality (Theorem~\ref{thm:ville}). Rejecting $H_0$ the first time $M_t$ reaches $1/\alpha$ therefore has Type-I rate at most $\alpha$, whatever the stopping rule and however many releases were examined first. This is the guarantee the rest of the paper rests on.

DESI's releases are nested: DR2 re-observes DR1's sky with more years of data, so the two share noise, the sequence of per-release snapshots $M_t$ is not a martingale, and \eqref{eq:ville-bound} does not apply to it directly. The overlap is easy to underestimate. It is tempting to treat DR2's ELG2 bin at $z_{\mathrm{eff}} = 1.321$ as a new measurement and multiply its e-value onto DR1's, but DR1 already contains an ELG measurement at $z_{\mathrm{eff}} = 1.317$, the same physical bin in the same footprint measured with one year of data rather than three, so the independence such a product needs does not hold. We handle the overlap instead by scoring DR1 once and then scoring only the information DR2 adds. Under a year-scaling model of the shared exposure that process is an exact martingale, and on the data it gives $M_{\mathrm{DR2}}^{\mathrm{joint}} = 66.7$ (Appendix~\ref{app:hierarchical} gives the construction, the martingale checks, and the sensitivity to the assumed year weighting).

Ville's bound applies to that joint process, which crosses at DR2, so the rejection is valid and the earlier DR1 look incurs no multiplicity penalty; Monte Carlo under $H_0$ confirms the bound is conservative. Throughout the paper we nonetheless quote the snapshot $M_{\mathrm{DR2}} = 33.97$ rather than the joint $66.7$, because it is the conservative member of the pair, valid at each fixed release whatever the dependence between releases, whereas the joint value is exact only under the year-scaling model (Appendix~\ref{app:hierarchical}).

Readers who prefer the $p$ scale can invert either quantity, but the two inversions answer different questions and neither is a Wilks $p$. The Markov $p = 1/M_{\mathrm{DR2}} = 0.029$ bounds the false-positive rate at a single fixed release, under any dependence between releases; the anytime-valid $p = 1/\sup_t M_t^{\mathrm{joint}} = 0.015$ bounds it across the whole sequence and at any stopping time. Both are far larger than the single-look Wilks $p = 2.2 \times 10^{-4}$, which assumes the asymptotic $\chi^2$ law and one fixed look. We report the e-value itself as the primary statistic throughout and do not convert it to a $\sigma$ (Appendix~\ref{app:sigma}).

Three checks put this number in context. The corresponding Wilks statistic, BAO-only at fixed Planck, is $\Delta\chi^2 = 16.86$ on $\chi^2(2)$ ($\sigma = 3.70$ two-sided), and gives the same DR2 rejection while using neither the supernova nor the CMB data behind DESI's published figures. Nor does the rejection depend on the mixture prior. Universal-inference e-values \citep{Wasserman2020} use no prior at all, fitting the MLE on one part of the data and evaluating the likelihood ratio on the rest, and they give $E_{\mathrm{UI}} \approx 78$ within DR2 and $\approx 110$ across releases, with no crossing in $200$ synthetic \LCDM\ datasets (Appendix~\ref{app:hierarchical}). The third check cuts the other way. These results use BAO data alone, and adding the compressed Planck CMB likelihood lowers the Default mixture e-value to $E^{\mathrm{BAO+CMB}}_{\mathrm{mix,Default}} = 2.19$, which does not reject (Appendix~\ref{app:joint-cmb}). It is the first of several ways the rejection fails to hold up.

\subsection{Localising the evidence: leave-one-out and per-bin-independent decomposition}
\label{sec:loo}

\begin{table}[t]
\centering
\caption{Leave-one-out and per-bin-independent e-values by redshift bin, at effective redshift $z_{\mathrm{eff}}$. \textit{Columns 3--4 (LOO)}: in each fold, $(w_0, w_a)$ is fitted to the six remaining bins and the resulting cosmology is used to compute $E_k$ on the held-out bin (Proposition~\ref{prop:loo}). \textit{Column 5 (per-bin mixture)}: mixture e-value of bin $k$ alone under the REGROW $\delta = 2$ test specification; the aggregate rows are defined in Section~\ref{sec:loo}. The two columns score different alternatives and are not directly comparable, which is why LRG2's entries (bold) differ by an order of magnitude; Section~\ref{sec:loo} explains the gap.}
\label{tab:loo}
\small
\begin{tabular}{@{}lcccc@{}}
\toprule
Held-out bin & $z_{\mathrm{eff}}$ & Fitted $(w_0, w_a)$ on 6 remaining bins & $E_k^{\mathrm{LOO}}$ & $M_k^{\mathrm{REGROW},\delta=2}$ \\
\midrule
BGS         & 0.295 & $(-0.845, -0.476)$ & 1.89 & 1.06 \\
LRG1        & 0.510 & $(-0.816, -0.596)$ & 0.71 & 0.96 \\
\textbf{LRG2}        & \textbf{0.706} & $\mathbf{(-0.881, -0.369)}$ & \textbf{55.98} & \textbf{4.25} \\
LRG3+ELG1   & 0.934 & $(-0.871, -0.332)$ & 8.73 & 1.84 \\
ELG2        & 1.321 & $(-0.865, -0.382)$ & 2.14 & 0.87 \\
QSO         & 1.484 & $(-0.857, -0.416)$ & 0.75 & 0.94 \\
Ly$\alpha$  & 2.330 & $(-0.850, -0.460)$ & 1.00 & 0.68 \\
\midrule
\multicolumn{3}{@{}l}{\textbf{LOO average} $\bar{E} = \tfrac{1}{7}\sum_k E_k^{\mathrm{LOO}}$} & \textbf{10.17} & \\
\multicolumn{3}{@{}l}{LOO average without LRG2 (six-bin)} & 2.54 & \\
\multicolumn{4}{@{}l}{\textbf{Per-bin-independent product} $\prod_k M_k^{\mathrm{REGROW},\delta=2}$} & \textbf{4.46} \\
\multicolumn{4}{@{}l}{Arithmetic-mean LEE-corrected statistic $(1/K)\sum_k M_k^{\mathrm{REGROW},\delta=2}$} & 1.52 \\
\bottomrule
\end{tabular}
\end{table}

Table~\ref{tab:loo} gives two complementary bin-by-bin decompositions. The LOO column ($E_k^{\mathrm{LOO}}$) holds out each bin and tests it against the cosmology fitted to the remaining six. The per-bin mixture column ($M_k$) scores each bin individually under the REGROW $\delta = 2$ prior; its arithmetic mean $\bar M_k$ is the linearity-of-expectation correction for selecting the most favourable of the seven bins, and the product $\prod_k M_k$ is a valid e-value for the more flexible per-bin-independent alternative in which each bin may favour its own $(w_0, w_a)$ (Appendix~\ref{app:loo-decomp}). The two columns are not on a common scale, which is why LRG2 reads $55.98$ in one and $4.25$ in the other. REGROW $\delta = 2$ dilutes single-bin evidence through its Occam penalty (Section~\ref{sec:prior-sens}), whereas LOO scores the held-out bin against the sharp six-bin best fit, so each column should be read against its own aggregate row.

Both decompositions localise the DR2 rejection to a single bin, LRG2 (Figure~\ref{fig:loo_bins}). In the LOO column LRG2 carries $78.6\%$ of $\sum_k E_k^{\mathrm{LOO}}$ ($91\%$ together with LRG3$+$ELG1), and the per-bin mixture column agrees, $M_{\mathrm{LRG2}} = 4.25$ being the only bin clearly above one. The per-bin MLEs are stable across folds ($w_0 \in [-0.88, -0.82]$, $w_a \in [-0.60, -0.33]$), so the bins agree on direction and differ only in weight.

Remove that bin, or account for having inspected all seven, and the rejection goes. Without LRG2 the LOO average falls from $\bar E^{\mathrm{LOO}} = 10.17$ to $2.54$, and the running mixture recomputed on the remaining six bins under the Default prior drops from $M_{\mathrm{DR2}} = 33.97$ to $0.49$, so the six residual bins marginally \emph{favour} \LCDM. Correcting for the look-elsewhere effect removes the rejection outright: neither the per-bin-independent product $\prod_k M_k = 4.46$ nor the arithmetic-mean statistic $\bar M_k = 1.52$ reaches $20$. This reproduces the bin-removal results of \cite{Liu2024LRG, Wang2024DR1, ColgainEvolved2025} under a formally calibrated statistic. The corresponding numbers under the Default flat prior, dominated by Bartlett dilution (Section~\ref{sec:prior-sens}), are in Appendix~\ref{app:loo-decomp}.

Concentration this strong is not what \LCDM\ produces, but it does not on its own establish that the signal is localised rather than smooth. A parametric bootstrap separates the two readings (Appendix~\ref{app:loo-decomp}). Under \LCDM\ truth no simulated realisation placed $75\%$ of the LOO evidence in a single bin, so the observed concentration is a real departure from the null; under a genuinely smooth $w_0w_a$ signal it occurs in $23\%$ of realisations, because the bins differ widely in redshift leverage and precision. What distinguishes the two readings is \emph{which} bin dominates. A smooth signal concentrates most often in LRG3$+$ELG1 and reproduces the observed LRG2 share in only $3.2\%$ of realisations, so the evidence sits where a smooth equation of state is least likely to put it. That is consistent with an outlier or a bin-specific systematic, though the bootstrap models only the published Gaussian covariance and cannot exclude one directly.

\begin{figure}[t]
\centering
\includegraphics[width=\textwidth]{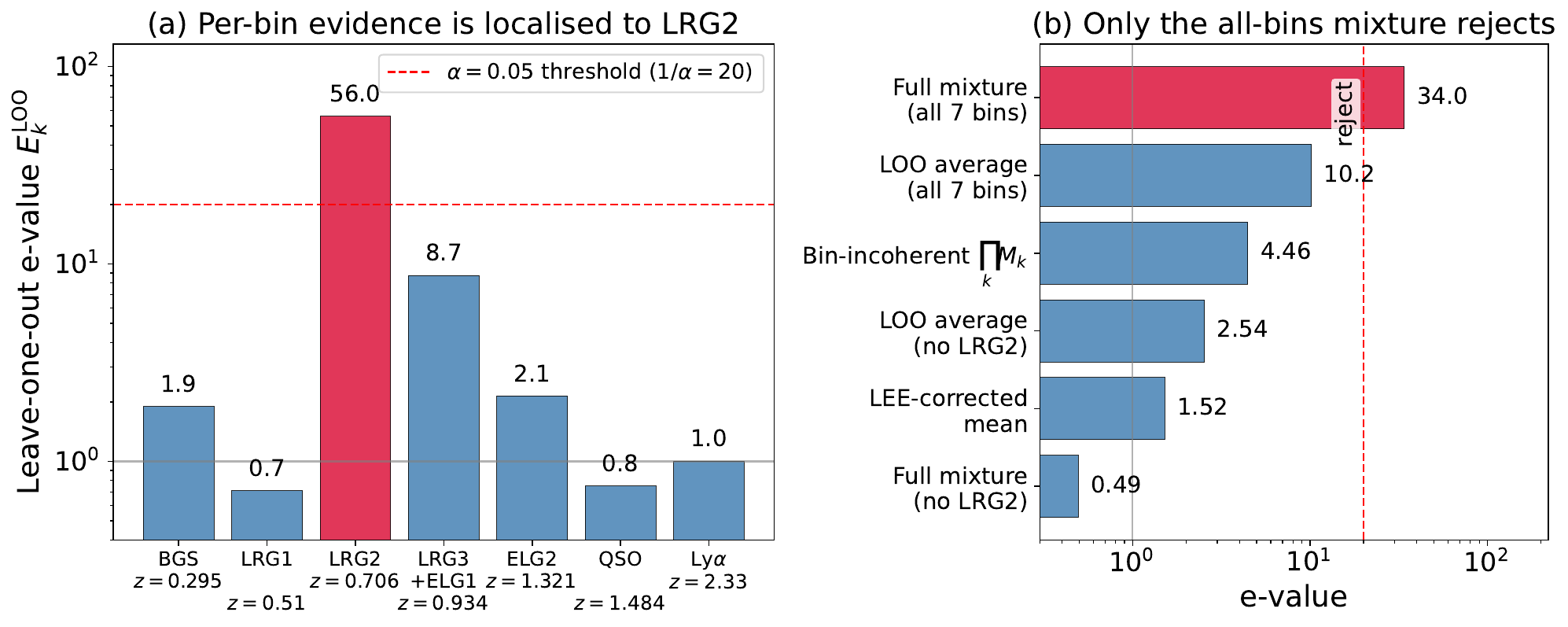}
\caption{Per-bin leave-one-out (LOO) e-values for DESI DR2 BAO. \emph{(a)} LOO e-value $E_k^{\mathrm{LOO}}$ of each redshift bin, tested against $w_0w_a$CDM fitted to the other six (Table~\ref{tab:loo}). \emph{(b)} Aggregate e-values against the same threshold: only the all-bins running mixture rejects (red); removing LRG2 and the look-elsewhere-corrected statistics all fall below.}
\label{fig:loo_bins}
\end{figure}

\subsection{Sensitivity to the test specification and prior}
\label{sec:prior-sens}

Because each $\pi$ defines its own e-process, a decision requires $\pi$ to be fixed in advance. Table~\ref{tab:summary} shows how much that choice matters. The rejection survives the near-\LCDM\ specifications and REGROW $\delta \geq 2$, and fails under the wide boxes and REGROW $\delta = 1$; an analyst targeting deviations of at least $2$ Fisher-$\sigma$ can reject \LCDM\ from DR1$\to$DR2, one targeting $1$ Fisher-$\sigma$ cannot. This is the same prior sensitivity the Bayes-factor analyses report \citep{OngBayesian2025, WangMota2025}, with the difference that the e-process attaches a Type-I guarantee to whichever specification is chosen.

The pattern follows from the geometry of the DR2 likelihood. The MLE sits $\delta_{\mathrm{MLE}} = 4.06$ Fisher-$\sigma$ from \LCDM, one Fisher-$\sigma$ being one standard deviation in the metric set by the DR2 Fisher information $F$, so a specification gains evidence as it moves mass toward that region. For the closed-form Gaussian-shell proxy $\pi_\delta = \mathcal N(\theta_0, \delta^2 F^{-1})$ the data-supported window is $\delta \in [0.87, 4]$ (Appendix~\ref{app:delta-star}). The wide boxes fail for the complementary reason, Bartlett--Jeffreys--Lindley dilution: a diffuse prior spreads its mass over regions the data cannot distinguish and pays the Occam factor $(1+\delta^2)^{-k/2}$ of \eqref{eq:M-delta}. That penalty is not peculiar to e-values; any wide-prior method, frequentist or Bayesian, pays it.

The thawing and freezing priors replace the reference boxes with distributions predicted by dark-energy models. Canonical late-time quintessence divides into freezing potentials (tracker, SUGRA; broadly $w_0 \in [-0.95, -0.75]$, $w_a \in [0, 0.3]$) and thawing potentials (PNGB, linear), which in slow-roll occupy the Caldwell--Linder band $w_a \in [-3(1+w_0),\,-(1+w_0)]$ for $w_0 \in [-1, -0.85]$ \citep{CaldwellLinder2005, ScherrerSen2008}; simple thawing models satisfying the null energy condition predict a preference for exactly this sector \citep{ShlivkoSteinhardt2024}. Integrating \eqref{eq:Mt} against these priors separates them sharply. The freezing region gives $M_{\mathrm{DR2}}^{\mathrm{freezing}} = 14.8$, which does not reject, while the thawing band gives $M_{\mathrm{DR2}}^{\mathrm{thawing}} \approx 1.1 \times 10^3$ (robust to the $w_0$ cut-off, $0.98$--$1.14 \times 10^3$; $M_{\mathrm{DR1}}^{\mathrm{thawing}} \approx 15$). The thawing band scores strongly because it runs nearly along the major, least-constrained axis of the anisotropic DR2 Fisher information, and its steep-thawing edge passes through the DR2-favoured region and contains the BAO MLE $(w_0, w_a) = (-0.856, -0.430)$: the prior stays where the likelihood is high and pays almost no Occam penalty.

\subsection{Forecasting DR3 and beyond}
\label{sec:forecast}

Once DR3 is released it slots into the running process without restarting the test. The new data's incremental e-value $E_{3\mid2}$ multiplies onto the running value, $M_{\mathrm{DR3}} = M_{\mathrm{DR2}}\,E_{3\mid2}$, with $E_{3\mid2}$ scored on what DR3 adds rather than on the full DR3 vector, exactly as DR2 was scored on what it added to DR1 (Appendix~\ref{app:hierarchical}). The guarantee \eqref{eq:ville-bound} carries over to the longer sequence unchanged, with no re-analysis, no multiplicity correction, and the same threshold.

What DR3 can settle depends on which measurements improve. In simulations of plausible Y4--Y5 data, two new high-redshift bins at $z = 1.7$ and $2.5$ barely move the e-process under either hypothesis: at those redshifts \LCDM\ and the DESI-preferred $w_0w_a$ predict distances differing by only $\sim 0.5\sigma$, too little for the new bins to tell the two apart. The decisive measurements are instead more precise ones of the \emph{existing} bins, above all LRG2, where the evidence is concentrated. Halving the LRG2 error in simulation sends the e-value well past threshold if $w_0w_a$ is true, and leaves it near zero if \LCDM\ is true.

A decisive DR3 therefore requires sharper measurements of the bins DESI already has, not only new high-redshift ones. Both forecasts assume the published DR2 covariance scales with exposure and that the fixed background of Section~\ref{sec:methods} continues to apply, so they indicate which measurements would be decisive rather than predicting a value of $M_{\mathrm{DR3}}$.

\section{Discussion and conclusions}
\label{sec:discussion}

Under a pre-specified alternative concentrated near the DR2-preferred direction, the running evidence across DR1$\rightarrow$DR2 reaches $M_{\mathrm{DR2}} = 33.97$, crossing the illustrative threshold of $20$. That rejection of $\Lambda$CDM is not robust. A leave-one-out decomposition places the evidence almost entirely within the LRG2 redshift bin ($z_{\text{eff}} = 0.706$), and removing that single bin collapses the running e-value to $M = 0.49$, mildly favouring $\Lambda$CDM.

The rejection rests on two conditions at once: a test specification concentrated near the data's preferred direction, and a signal shared coherently across the redshift bins. Relax either, by widening the prior or by allowing each bin its own $(w_0, w_a)$ and correcting for the look-elsewhere effect across the seven, and it disappears (Section~\ref{sec:loo}). Simulations show in addition that a smooth equation of state rarely concentrates in LRG2 specifically (Appendix~\ref{app:loo-decomp}). The DR2 BAO signal is therefore fragile and localised rather than broad evidence for dynamical dark energy, and whether LRG2 reflects new physics or a localised systematic will be resolved only by additional data.

Anytime-validity matters most in exactly this regime, and most of all when the evidence trajectory is not monotone (Figure~\ref{fig:multiplicity}). Because \eqref{eq:ville-bound} governs the running maximum, a crossing at any $t$ stands as a valid rejection regardless of what the trajectory does before or after it, which at the $5\sigma$ discovery scale is worth considerably more than the saving on multiplicity corrections.

Four caveats qualify these conclusions, and Appendix~\ref{app:limitations} sets out each one with the follow-up it points to. The background cosmology is held fixed at Planck 2018 values, and Table~\ref{tab:background-sensitivity} shows the headline e-values moving by two orders of magnitude across self-consistent Planck columns. Every number conditions on the CPL parametrisation, so whether the LRG2 signal survives outside CPL is the most direct robustness test still open. The $z = 1$ data-split is underpowered for $w_a$ and we do not rely on it. Finally, the cross-dataset comparison scores pre-specified SN$+$CMB alternatives after the fact rather than tracking the surveys jointly; a genuine multi-stream e-process would be preferable, but needs the full constituent likelihoods rather than the published posteriors used here.

We recommend that every release of a sequential survey report $M_t$ with a stated test specification $\pi$, alongside the standard Wilks $\sigma$ and a prior-sensitivity table analogous to Table~\ref{tab:summary} (with at least one physics-motivated $\pi$). E-values also help when reasoning through candidate hypotheses, since they compose by multiplication into new e-processes and survive optional continuation, so an analyst can refine, combine, and re-test without invalidating the guarantee. As data volume and re-analysis frequency grow along the DESI$\to$Euclid$\to$Roman$\to$LSST chain, and across cosmology more broadly, sequential validity becomes essential.

\appendix
\section{E-value theory and proofs}
\label{app:proofs}

\begin{definition}[E-Value]
A random variable $E \geq 0$ is an e-value for testing $H_0$ if $\E[E \mid H_0] \leq 1$.
\end{definition}

\begin{theorem}[Ville's Inequality]
\label{thm:ville}
Let $\{M_t\}_{t \geq 0}$ be a non-negative supermartingale under $H_0$ with $\E[M_0 \mid H_0] \leq 1$. Then for any $\alpha \in (0,1)$,
\begin{equation}
\Prob_{H_0}\!\left(\sup_{t \geq 0} M_t \geq 1/\alpha\right) \leq \alpha.
\end{equation}
\end{theorem}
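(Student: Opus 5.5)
We prove the bound first on a finite horizon and then pass to the limit. The argument is the standard optional-stopping proof. Fix $\alpha\in(0,1)$ and a finite horizon $T$. Define the stopping time
\[
\tau=\min\{t\le T: M_t\ge 1/\alpha\},
\]
with $\tau=T$ if no such $t$ exists. This is a bounded stopping time with respect to the filtration generated by the releases. On the event $A_T=\{\max_{0\le t\le T}M_t\ge 1/\alpha\}$ we have $M_\tau\ge 1/\alpha$. Off that event $M_\tau\ge 0$ by non-negativity.

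Next we bound $\E[M_\tau]$. Because the supermartingale property holds under $H_0$ and $\tau$ is bounded, optional stopping gives $\E[M_\tau\mid H_0]\le \E[M_0\mid H_0]\le 1$. Concretely, write
\[
M_\tau=M_0+\sum_{s=1}^{T}\mathbf 1\{\tau\ge s\}(M_s-M_{s-1}).
\]
The indicator $\mathbf 1\{\tau\ge s\}$ is $\mathcal F_{s-1}$-measurable, so each summand has conditional expectation $\le 0$. This avoids any integrability subtlety beyond $\E[M_t]<\infty$, which itself follows from the supermartingale property together with $\E[M_0]\le 1$.

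Markov's inequality applied to $M_\tau$ then gives
\[
\Prob_{H_0}(A_T)\le \alpha\,\E[M_\tau\,\mathbf 1_{A_T}]\le \alpha\,\E[M_\tau]\le\alpha.
\]

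Finally we remove the horizon. The events $A_T$ increase in $T$ to $\{\sup_{t\ge0}M_t\ge 1/\alpha\}$. Strictly, the union is the event $\{\exists t: M_t\ge1/\alpha\}$, which coincides with the supremum event in discrete time up to the case where the supremum is approached but never attained. To handle that case we run the argument at level $1/\alpha-\epsilon$ and let $\epsilon\downarrow0$. Continuity of measure then yields the stated bound.

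The main obstacle is the step $\E[M_\tau]\le\E[M_0]$ without assuming uniform integrability. Boundedness of $\tau$ (via the finite horizon) is exactly what makes this step go through, which is why the truncation at $T$ precedes the limit.
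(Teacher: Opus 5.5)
Your proof is correct and is essentially the paper's argument. Your truncated hitting time is exactly the paper's $t\wedge\tau$, optional stopping at that bounded time is combined with $M_\tau\ge 1/\alpha$ on the crossing event to get the finite-horizon bound, and the horizon is then sent to infinity. Your $\epsilon$-argument for a supremum that is approached but never attained is a small improvement: the paper simply equates $\{\sup_t M_t\ge 1/\alpha\}$ with $\{\tau<\infty\}$, and in discrete time that is only an inclusion in the direction the paper does not need.
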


\begin{proof}
This is the maximal inequality for non-negative supermartingales~\cite{Ville1939}. Fix $\lambda = 1/\alpha$ and let $\tau = \inf\{t : M_t \geq \lambda\}$ be the first crossing time. By the optional stopping theorem applied to the non-negative supermartingale $\{M_{t \wedge \tau}\}$, $\E[M_{t \wedge \tau}] \leq \E[M_0] \leq 1$ for every $t$. On the event $\{\tau \leq t\}$ we have $M_{t \wedge \tau} = M_\tau \geq \lambda$, so $\E[M_{t \wedge \tau}] \geq \lambda\,\Prob(\tau \leq t)$, giving $\Prob(\tau \leq t) \leq 1/\lambda = \alpha$. Letting $t \to \infty$ yields $\Prob(\sup_{t} M_t \geq 1/\alpha) = \Prob(\tau < \infty) \leq \alpha$. For DESI's nested releases the theorem is applied to the joint-likelihood-ratio martingale constructed in Appendix~\ref{app:hierarchical}, which is exact under the year-scaling model; the per-release snapshot mixture of \eqref{eq:Mt} is a valid e-value at each fixed $t$ but is not itself a martingale on the release filtration.
\end{proof}

\begin{corollary}[Single-look Markov bound]
\label{cor:markov}
If $E$ is an e-value ($\E[E \mid H_0] \leq 1$), then $\Prob_{H_0}(E \geq 1/\alpha) \leq \alpha$.
\end{corollary}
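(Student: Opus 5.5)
The plan is to obtain the corollary either directly from Markov's inequality or as the one-step case of Theorem~\ref{thm:ville}; I will give both routes, and the second shows that the single-look bound is the $t=0$ instance of the sequential guarantee.

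\emph{Direct route.} Since $E\ge 0$, we have the pointwise inequality
\[
E \;\ge\; \tfrac{1}{\alpha}\,\mathbf{1}\{E\ge 1/\alpha\}.
\]
Taking expectations under $H_0$ and using monotonicity of the integral gives
\[
1 \;\ge\; \E[E\mid H_0] \;\ge\; \tfrac{1}{\alpha}\,\Prob_{H_0}(E\ge 1/\alpha),
\]
and rearranging yields the claim. The first inequality is exactly the defining property of an e-value.

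\emph{Route via Theorem~\ref{thm:ville}.} Define the constant process $M_t := E$ for all $t\ge 0$, with respect to the filtration $\mathcal F_t=\sigma(E)$.
\begin{itemize}
\item The process is non-negative.
\item It is adapted, and $\E[M_{t+1}\mid\mathcal F_t]=E=M_t$, so it is a martingale and hence a supermartingale.
\item Integrability holds because $\E[E]\le 1$.
\item The initial condition $\E[M_0\mid H_0]\le 1$ is the e-value condition.
\end{itemize}
Since $\sup_t M_t=E$, the theorem gives $\Prob_{H_0}(E\ge 1/\alpha)\le\alpha$.

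There is no real obstacle. The only point needing care is that the event is $\{E\ge 1/\alpha\}$ with non-strict inequality, and the indicator bound above handles it correctly. For $\alpha\in(0,1)$ the threshold $1/\alpha$ is positive and finite, so the division is legitimate.
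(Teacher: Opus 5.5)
Your proposal is correct and follows the paper's proof, which likewise applies Markov's inequality, $\Prob(E \geq 1/\alpha) \leq \alpha\,\E[E] \leq \alpha$, and remarks that the bound is the $t=0$ special case of Theorem~\ref{thm:ville}. Your constant-process construction $M_t := E$ simply makes that remark explicit.
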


\begin{proof}
By Markov's inequality, $\Prob(E \geq 1/\alpha) \leq \alpha\,\E[E] \leq \alpha$. This is the $t = 0$ special case of Theorem~\ref{thm:ville}.
\end{proof}

\begin{proposition}[Likelihood Ratio E-Value]
\label{prop:lr}
For simple hypotheses $H_0: P = P_0$ versus $H_1: P = P_1$, the likelihood ratio $E = P_1(X)/P_0(X)$ satisfies $\E_{P_0}[E] = 1$.
\end{proposition}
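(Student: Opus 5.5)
The plan is a direct change-of-measure computation: write the expectation of $E$ under $P_0$ as an integral against $P_0$'s density and cancel that density. To cover discrete and continuous data (and the Gaussian BAO likelihood) at once, I would fix a common dominating $\sigma$-finite measure $\mu$, for example $\mu = P_0 + P_1$. Let $p_0 = dP_0/d\mu$ and $p_1 = dP_1/d\mu$, and read $P_1(X)/P_0(X)$ as $p_1(X)/p_0(X)$. For the DESI application $\mu$ is Lebesgue measure on $\mathbb{R}^{13}$, and $p_0, p_1$ are the multivariate Gaussian densities with the published covariance, evaluated at the \LCDM{} and the fixed-$\theta$ predictions.

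The core step is
\[
\E_{P_0}[E] \;=\; \int_{\{p_0>0\}} \frac{p_1(x)}{p_0(x)}\,p_0(x)\,d\mu(x) \;=\; \int_{\{p_0>0\}} p_1\,d\mu \;=\; P_1(p_0>0).
\]
The restriction to $\{p_0>0\}$ costs nothing, because the set $\{p_0=0\}$ has $P_0$-probability zero. Non-negativity of $E$ is immediate since $p_1 \ge 0$.

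The only real issue is the support. The equality $\E_{P_0}[E]=1$ holds exactly when $P_1(p_0=0)=0$, that is, when $P_1 \ll P_0$. In general one gets only $\E_{P_0}[E]\le 1$, which already makes $E$ an e-value in the sense of the Definition above. I would therefore state the equality under the absolute-continuity hypothesis. I would then note that it holds in the paper's setting, since Gaussian densities with a common nonsingular covariance are strictly positive everywhere.

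Finally I would remark on how this feeds Proposition~\ref{prop:mixture}. By Tonelli's theorem, the identity integrates against any proper $\pi$ to give $\E_{P_0}[E_{\mathrm{mix}}]=1$. Applied conditionally, it also gives the martingale property of the incremental likelihood ratios used in Appendix~\ref{app:hierarchical}.
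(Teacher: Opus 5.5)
Your proposal is correct and follows the same route as the paper: cancel the null density in $\E_{P_0}[E]=\int (p_1/p_0)\,p_0\,d\mu=\int p_1\,d\mu$. Your treatment of the support sharpens the paper's argument: its one-line proof implicitly assumes $P_1 \ll P_0$ when it cancels $P_0(x)$, and you correctly note that in general one gets only $\E_{P_0}[E]=P_1(p_0>0)\le 1$, which still suffices for the e-value property and is moot for the strictly positive Gaussian likelihoods used in the paper.
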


\begin{proof}
$\E_{P_0}[E] = \int \frac{P_1(x)}{P_0(x)} P_0(x) \, dx = \int P_1(x) \, dx = 1$.
\end{proof}

\begin{proposition}[Data-Split E-Value Validity]
\label{prop:split}
Let $D = (D_{\text{train}}, D_{\text{test}})$ be independent data splits. Let $\hat{\theta} = \hat{\theta}(D_{\text{train}})$ be parameters fitted to the training data. Then:
\begin{equation}
E = \frac{L(D_{\text{test}} \mid \hat{\theta})}{L(D_{\text{test}} \mid H_0)}
\end{equation}
is a valid e-value for testing $H_0$.
\end{proposition}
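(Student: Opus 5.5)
The plan is to reduce the statement to Proposition~\ref{prop:lr} by conditioning on the training split. Write the null and alternative densities as $P_0$ and $P_\theta$. Independence of $D_{\text{train}}$ and $D_{\text{test}}$ under $H_0$ means the joint null law factorises as $P_0(D_{\text{train}})\,P_0(D_{\text{test}})$. The same factorisation must hold for every $P_\theta$ in the alternative family, so that $L(D_{\text{test}}\mid\theta)$ is a genuine marginal density of the test split alone. For the DESI likelihood this is the block-diagonal covariance across bins: splitting at the bin level, as in Section~\ref{sec:methods}, makes the Gaussian factors exactly independent.

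First, nonnegativity is immediate, since $E$ is a ratio of densities. Next, condition on $D_{\text{train}} = d$. Because $\hat\theta(d)$ is then a fixed point $\theta^\ast$, the ratio
\[
E = \frac{L(D_{\text{test}}\mid \theta^\ast)}{L(D_{\text{test}}\mid H_0)}
\]
is a simple-versus-simple likelihood ratio in the test data. By independence, the conditional null law of $D_{\text{test}}$ given $D_{\text{train}} = d$ is still $P_0$. Proposition~\ref{prop:lr} then gives
\[
\E_{H_0}[E \mid D_{\text{train}} = d] = \int P_{\theta^\ast}(x)\,dx \le 1,
\]
with equality when $P_{\theta^\ast}$ is a proper density with the same support. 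Finally, apply the tower property:
\[
\E_{H_0}[E] = \E_{H_0}\!\big[\E_{H_0}[E\mid D_{\text{train}}]\big] \le 1,
\]
so $E$ is an e-value by the Definition. Corollary~\ref{cor:markov} then supplies the Type-I bound.

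The main obstacle is not the calculation but the precise role of independence and measurability. The conditional expectation step needs three things. First, $\hat\theta$ must be a measurable function of $D_{\text{train}}$ only; it must never peek at the test data. Second, the test split's null law must not change after conditioning. Third, the map $(d,x)\mapsto L(x\mid\hat\theta(d))$ must be jointly measurable, so that Fubini/Tonelli justifies iterating the integrals. Tonelli suffices because everything is nonnegative.

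I would also remark on two things. If $H_0$ were composite (e.g.\ free nuisance parameters), the denominator would need to be the null supremum or a specific null member, and the argument would go through only for the latter. Here $H_0$ is simple at fixed Planck background, so this caveat does not bite. And the MLE can be replaced by any estimator, or even a random choice independent of $D_{\text{test}}$, without changing the proof. This is the universal-inference construction of \cite{Wasserman2020} used in Section~\ref{sec:eprocess}.
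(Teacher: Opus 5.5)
Your proof is correct and follows the paper's own argument. You condition on $D_{\text{train}}$ so that $\hat\theta$ is fixed, use independence to keep the null law of $D_{\text{test}}$ equal to $P_0$, apply Proposition~\ref{prop:lr}, and finish with the tower property. Your $\le 1$, the Tonelli remark, and the factorisation assumption on the alternative are harmless extras; the proof only needs $x \mapsto L(x\mid\theta)$ to be a density on the test space.

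One side remark is backwards, however. For a composite null, it is the null supremum $\sup_{\theta_0 \in H_0} L(D_{\text{test}}\mid\theta_0)$ in the denominator that gives validity under every null member. It can only shrink $E$ below the ratio against the true member, and this is exactly the universal-inference construction you cite. A single fixed null member in the denominator guarantees $\E[E]\le 1$ only when that member happens to be the truth.
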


\begin{proof}
Conditional on $D_{\text{train}}$, $\hat{\theta}$ is fixed. Under $H_0$, $D_{\text{test}} \sim P_0$ independent of $D_{\text{train}}$. Thus:
\begin{equation}
\E[E \mid D_{\text{train}}] = \int \frac{L(x \mid \hat{\theta})}{L(x \mid H_0)} P_0(x) \, dx = 1
\end{equation}
by the same argument as Proposition~\ref{prop:lr}. Taking expectations: $\E[E] = \E[\E[E \mid D_{\text{train}}]] = 1$.
\end{proof}

\begin{proposition}[Uniform Mixture E-Value Validity]
\label{prop:mixture}
Let $\mathcal{G} = \{\theta_1, \ldots, \theta_M\}$ be a finite set of parameter values specified before seeing the data. The uniform mixture
\begin{equation}
E_{\text{mix}} = \frac{1}{M} \sum_{j=1}^M \frac{L(\text{data} \mid \theta_j)}{L(\text{data} \mid H_0)}
\end{equation}
is a valid e-value.
\end{proposition}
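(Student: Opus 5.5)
The proof combines non-negativity, linearity of expectation, and Proposition~\ref{prop:lr} applied term by term. Write $E_j = L(\text{data}\mid\theta_j)/L(\text{data}\mid H_0)$ for $j=1,\dots,M$. Each $E_j$ is a simple-versus-simple likelihood ratio: $P_1 = L(\cdot\mid\theta_j)$ is a fixed density on the data space, and $P_0 = L(\cdot\mid H_0)$ is the null density. The Gaussian BAO likelihood with the published covariance is strictly positive, so the ratio is well defined everywhere.

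The first step is non-negativity. Each $E_j \ge 0$ because it is a ratio of a non-negative density to a positive one. Hence $E_{\text{mix}} = \frac{1}{M}\sum_j E_j \ge 0$.

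The second step is the expectation bound. The grid $\mathcal G$ is specified before seeing the data, so every $\theta_j$ is a deterministic constant rather than a function of the data. Proposition~\ref{prop:lr} therefore applies to each $E_j$ and gives $\E_{H_0}[E_j] = \int L(x\mid\theta_j)\,dx = 1$. Linearity of expectation then yields
\begin{equation}
\E_{H_0}[E_{\text{mix}}] = \frac{1}{M}\sum_{j=1}^M \E_{H_0}[E_j] = \frac{1}{M}\cdot M = 1 \le 1,
\end{equation}
so $E_{\text{mix}}$ is an e-value in the sense of the Definition. The argument extends unchanged to any fixed weights $w_j\ge 0$ with $\sum_j w_j = 1$. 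By Fubini--Tonelli, which applies because the integrand is non-negative, it also extends to a proper continuous $\pi$; this covers the statement in Section~\ref{sec:evalues}.

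The only step that needs care is the pre-specification of $\mathcal G$, which is what makes each $\theta_j$ non-random and lets Proposition~\ref{prop:lr} apply. If the $\theta_j$ depended on the data, the integral identity would fail; the maximised ratio is exactly such a case. In that situation I would condition on any auxiliary randomness used to choose $\mathcal G$, provided that randomness is independent of the data, and then average, as in Proposition~\ref{prop:split}. A smaller point is that each $L(\cdot\mid\theta_j)$ must integrate to one over the same sample space as the null. This holds because both are Gaussian densities on $\mathbb{R}^{13}$ with the same covariance and different means.
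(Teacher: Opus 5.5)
Your proof is correct and follows the paper's argument: each term is a simple-versus-simple likelihood ratio with null expectation one by Proposition~\ref{prop:lr}, and linearity of expectation then gives $\E_{H_0}[E_{\text{mix}}]=1$. Your extra remarks on non-negativity, strict positivity of the null density, and the Fubini--Tonelli extension to a proper continuous $\pi$ are sound refinements that the paper's proof leaves implicit.
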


\begin{proof}
Each term $E_j = L(\text{data} \mid \theta_j) / L(\text{data} \mid H_0)$ is a likelihood ratio for a simple alternative, so $\E_{H_0}[E_j] = 1$ by Proposition~\ref{prop:lr}. By linearity of expectation:
\begin{equation}
\E_{H_0}[E_{\text{mix}}] = \frac{1}{M} \sum_{j=1}^M \E_{H_0}[E_j] = \frac{1}{M} \cdot M = 1.
\end{equation}
\end{proof}

\begin{proposition}[LOO Average E-Value Validity]
\label{prop:loo}
Let $E_1, \ldots, E_K$ be leave-one-out e-values, where $E_k = L(D_k \mid \hat{\theta}_{-k}) / L(D_k \mid H_0)$ and $\hat{\theta}_{-k}$ is fitted on all data except fold $k$. The average $\bar{E} = \frac{1}{K}\sum_{k=1}^K E_k$ is a valid e-value.
\end{proposition}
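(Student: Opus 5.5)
The plan is to show that each $E_k$ is individually a valid e-value and then finish by linearity of expectation, exactly as in Proposition~\ref{prop:mixture}. The first step is to make explicit the independence structure the statement needs. Under $H_0$, fold $k$ (the data $D_k$ of bin $k$) must be independent of the remaining folds $D_{-k}$. For DESI this comes from the block-diagonal covariance of Section~\ref{sec:methods}. Under the Gaussian BAO likelihood, zero cross-bin covariance is equivalent to independence. This is also why the folds are taken at the bin level rather than the point level, since the anti-correlated $D_M/r_d$ and $D_H/r_d$ pair within a bin would break independence.

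With independence in hand, fix $k$ and apply Proposition~\ref{prop:split} with $D_{\text{train}} = D_{-k}$ and $D_{\text{test}} = D_k$. Since $\hat\theta_{-k}$ is a measurable function of $D_{-k}$ only, conditioning on $D_{-k}$ freezes it. The ratio $L(D_k\mid\hat\theta_{-k})/L(D_k\mid H_0)$ is then a simple-versus-simple likelihood ratio in $D_k$. Its conditional null expectation is one by Proposition~\ref{prop:lr}, since the integral of the normalised Gaussian density of $D_k$ under $\hat\theta_{-k}$ equals one. The tower property then gives $\E_{H_0}[E_k] = 1$. I would also note that $E_k \ge 0$ trivially, and that the argument is unaffected by how $\hat\theta_{-k}$ is chosen (MLE or any other rule), provided it ignores $D_k$.

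The final step is linearity: $\E_{H_0}[\bar E] = \frac1K\sum_k \E_{H_0}[E_k] = 1$. This holds even though the $E_k$ are strongly dependent on each other, because each $D_j$ enters several folds as training data. Only the marginal expectations matter, so no independence across the $E_k$ is required, unlike for the product construction.

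The main obstacle is the first step, because the conclusion genuinely fails without fold independence. If $D_k$ were correlated with $D_{-k}$, conditioning on $D_{-k}$ would change the null law of $D_k$. The conditional expectation would then be $\int (p_{\hat\theta}/p_0)\, p_0(\cdot\mid D_{-k})$, which need not equal one. I would therefore state the block-diagonal independence explicitly as a hypothesis. I would also remark that a composite null with fitted nuisance parameters would require evaluating the denominator at the fixed $H_0$ values (here the fixed Planck background), which is the setting the paper uses.
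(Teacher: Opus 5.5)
Your proposal is correct and follows the paper's proof. You apply Proposition~\ref{prop:split} to each fold by conditioning on $D_{-k}$, which freezes $\hat\theta_{-k}$ and makes $E_k$ a valid e-value, and then linearity of expectation gives $\E_{H_0}[\bar E]\le 1$ with no need for independence across the $E_k$. Stating the bin-level independence of $D_k$ and $D_{-k}$ (from the block-diagonal Gaussian covariance) as an explicit hypothesis usefully pins down what the paper leaves implicit in its phrase ``provides an independent test.''
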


\begin{proof}
Each $E_k$ is a valid e-value by Proposition~\ref{prop:split}: conditional on the training data $D_{-k}$, $\hat{\theta}_{-k}$ is fixed, and the held-out data $D_k$ provides an independent test. Therefore $\E[E_k \mid H_0] \leq 1$ for each $k$. By linearity:
\begin{equation}
\E[\bar{E} \mid H_0] = \frac{1}{K}\sum_{k=1}^K \E[E_k \mid H_0] \leq \frac{1}{K} \cdot K = 1.
\end{equation}
Note: the \textit{product} $\prod_k E_k$ does \textit{not} have this guarantee for LOO folds because the training sets overlap and the fitted $\hat\theta_{-k}$ are correlated across folds. By contrast, the per-bin-independent product $\prod_k M_k^{\mathrm{REGROW}}$ in Section~\ref{sec:loo} is valid because the per-bin mixture e-values use the block-diagonal independence of the DR2 covariance and a fixed prior, not LOO-fitted parameters.
\end{proof}

\begin{remark}[Gaussian Likelihoods]
For multivariate Gaussian likelihoods with known covariance $C$:
\begin{equation}
E = \exp\left(\frac{\chi^2(H_0) - \chi^2(\hat{\theta})}{2}\right)
\end{equation}
where $\chi^2(\theta) = (d - t(\theta))^T C^{-1} (d - t(\theta))$.
\end{remark}

\begin{remark}[Maximised $\chi^2$ is not an e-value]
Replacing the fixed $\hat\theta$ above by the maximiser over a $k$-dimensional alternative, $\Delta\chi^2(k) = \sup_\theta[\chi^2(H_0) - \chi^2(\theta)]$, breaks the e-value property. For $k \geq 1$, $\Delta\chi^2(k) \sim \chi^2_k$ under $H_0$ (Wilks), so
\begin{equation}
\E_{H_0}\!\left[\exp\!\left(\tfrac{1}{2}\Delta\chi^2(k)\right)\right]
= \int_0^\infty e^{x/2}\,\frac{x^{k/2-1}e^{-x/2}}{2^{k/2}\Gamma(k/2)}\,dx
= \infty,
\label{eq:wilks-divergence}
\end{equation}
the integrand reducing to $x^{k/2-1}/(2^{k/2}\Gamma(k/2))$, which does not decay. The null expectation diverges, so $e^{\Delta\chi^2/2}$ is not an e-value. This is why the mixture of \eqref{eq:Emix} averages over $\theta$ rather than maximising.
\end{remark}

\section{Methodological notes}
\label{app:methods-notes}

\subsection{Why we do not report a \texorpdfstring{$\sigma$}{sigma}-conversion of the e-value}
\label{app:sigma}
It is tempting to convert an e-value into an equivalent Gaussian $\sigma$.
We do not do this because the two quantities answer different questions.
A Wilks $\sigma$ is a single-look local significance under asymptotic
assumptions. An e-process is calibrated to control the false-positive rate over a
sequence of repeated looks. Mapping one number onto the other would therefore
hide the main point of the analysis: the evidence measure is valid across
DR1, DR2, DR3, and later releases without redefining the threshold.

The obstacle is that an e-value has no canonical null distribution. Its defining constraint $\E[E \mid H_0] \leq 1$ yields only Markov's inequality, $\Prob(E \geq k \mid H_0) \leq 1/k$, so a conversion to $\sigma$ is not pinned down by the definition: it returns a range whose width is set by how much structure one is willing to assume beyond that constraint. The two rigorous endpoints of the range bracket the difficulty. At one end, the strict Markov bound $\sigma_{\mathrm{Markov}} = \Phi^{-1}(1 - 1/(2E))$, with $\Phi^{-1}$ the standard normal quantile function, assumes nothing beyond $\E[E \mid H_0] \leq 1$ and gives $2.18$ at $E = 33.97$; it is distribution-free, but it understates the frequentist precision the parametric model actually has. At the other, a Monte Carlo tail computed under the Gaussian likelihood DESI assumes ($N = 2 \times 10^5$, code release) gives $\sigma_{\mathrm{emp}} = 3.69$, reproducing Wilks in this aligned-prior, local-Gaussian regime; but reaching that number means giving up the distribution-free Type-I control that motivated the e-value in the first place, and the near-equality is empirical rather than a theorem, since Le Cam local asymptotic normality makes the two close without guaranteeing equality for a mixture likelihood ratio. Neither endpoint is the right headline number, so we report $E$ itself, together with its Markov inversion $1/E$, and introduce no single ``e-value $\sigma$''. Conversions are delicate even among $\chi^2$-based significances: DESI's published BAO$+$CMB figure comes from $\Delta\chi^2 = 12.5$ on $k = 2$ extra parameters \citep{DESI2025}, i.e.\ $p = 0.0019$ or $3.1\sigma$ two-sided, whereas the naive $\sqrt{\Delta\chi^2} \approx 3.5\sigma$ reading holds only for $k = 1$.

\subsection{Hierarchical year-by-year noise model and martingale verification}
\label{app:hierarchical}

This appendix constructs the process that Ville's bound \eqref{eq:ville-bound} is applied to. The snapshot mixture $M_t$ of \eqref{eq:Mt} scores release $t$'s data alone; it is a valid e-value at each fixed $t$ ($\E[M_t \mid H_0] = 1$) whatever the dependence between releases, but the sequential guarantee needs the joint likelihood ratio on the filtration $\mathcal F_t = \sigma(\mathcal D_{\le t})$. The two coincide only if the snapshot is sufficient for the pair, which under the model below means $C_{\mathrm{DR2}} = \tfrac{1}{3} C_{\mathrm{DR1}}$ on the matched bins, and the published covariances straddle that ratio bin by bin ($0.22$--$0.49$, median $0.32$) without satisfying it. As Section~\ref{sec:eprocess} notes, the overlapping ELG bins rule out a naive multiplicative decomposition $E_{\mathrm{DR1}} \times E_{\mathrm{ELG2} \mid \mathrm{DR1}}$, so the releases have to be related through a model of the exposure they share. The joint law used throughout is the year-scaling decomposition
\begin{equation}
\mathrm{DR1} = \mu + \varepsilon_{y1}, \qquad \mathrm{DR2} = \mu + \tfrac{1}{3}\varepsilon_{y1} + \tfrac{2}{3}\varepsilon_{y23}, \qquad \varepsilon_{y23} \perp \varepsilon_{y1},
\end{equation}
where $\mu$ is the true BAO observable vector, common to both releases, and the $\varepsilon$ terms are the measurement noise contributed by different observing years: $\varepsilon_{y1}$ by year~1 and $\varepsilon_{y23}$ by years~2--3. DR1 sees year~1 only, so it carries $\varepsilon_{y1}$ in full. DR2 is the three-year cumulative average, so it carries the same year-1 noise down-weighted to one third, plus two thirds of the independent noise from the two later years. The shared $\varepsilon_{y1}$ term is exactly the overlap between the releases, and the $(\frac{1}{3}, \frac{2}{3})$ weights are fixed by survey design rather than fitted. The required new-years covariance is
\begin{equation}
\mathrm{cov}(\varepsilon_{y23}) = \tfrac{9}{4}\,C_{\mathrm{DR2}} - \tfrac{1}{4}\,C_{\mathrm{DR1}},
\end{equation}
which is positive semi-definite for the published DESI DR1 and DR2 covariances on the bins that match across releases (minimum eigenvalue $+7.3 \times 10^{-3}$).

In closed form, the conditional mean at the observed DR1 is $\E[M_{\mathrm{DR2}} \mid \mathcal F_1] = 0.750$, against the martingale value $M_{\mathrm{DR1}} = 1.054$; over $H_0$ draws it lands \emph{above} $M_{\mathrm{DR1}}$ on $59\%$ of realisations, so the snapshot sequence is neither a super- nor a submartingale. (Monte Carlo estimates of this conditional mean are uninformative: the conditional factor is heavy-tailed enough that even an exact-martingale control process fails the same finite-sample diagnostic, so all martingale statements in this appendix rest on closed-form identities.)

As in Section~\ref{sec:eprocess}, the martingale scores DR1 once and then scores only the new information. Under the year-scaling law the years-2--3 content of DR2 is $y = \tfrac{3}{2}\mathrm{DR2} - \tfrac{1}{2}\mathrm{DR1} = \mu + \varepsilon_{y23}$, independent of DR1, with covariance $\mathrm{cov}(\varepsilon_{y23})$ above. The joint mixture
\begin{equation}
M_{\mathrm{DR2}}^{\mathrm{joint}} \;=\; \int \frac{L(\mathrm{DR1} \mid \theta)\, L(y \mid \theta)}{L(\mathrm{DR1} \mid H_0)\, L(y \mid H_0)}\, \pi(\mathrm d\theta)
\label{eq:Mjoint}
\end{equation}
satisfies $\E[M_{\mathrm{DR2}}^{\mathrm{joint}} \mid \mathcal F_1] = M_{\mathrm{DR1}}$ exactly: for each $\theta$ the conditional factor integrates to one by the Gaussian moment identity, and the DR1 factor is $\mathcal F_1$-measurable. The identity is confirmed numerically in the code release. On the data, $M_{\mathrm{DR2}}^{\mathrm{joint}} = 66.7$, incremental e-value $E_{2|1} = 63.3$, anytime-valid $p = 0.015$; under $H_0$, $N = 10^5$ Monte Carlo draws give $\Prob(\sup_t M_t \geq 20) = 1.2 \times 10^{-3}$, comfortably inside Ville's $\alpha = 0.05$. The main text quotes the snapshot pair $M_{\mathrm{DR2}} = 33.97$, $p = 0.029$: the conservative member, valid at fixed $t$ under \emph{any} joint law with the published DR2 marginal, whereas the joint value is exact only under the year-scaling model.

The universal-inference values of Section~\ref{sec:eprocess} use the same information split: the across-release $E_{\mathrm{UI}} \approx 110$ fits the MLE on DR1 and scores it on the innovation $y$, and the within-DR2 $E_{\mathrm{UI}} \approx 78$ fits on random subsets of bins and scores the held-out bins, averaged over $400$ splits (an average of e-values is an e-value). Both exceed the mixture because a data-fitted point alternative escapes the Occam averaging, and in $200$ synthetic-$H_0$ trials neither crossed $20$ ($P < 0.015$ at $95\%$ confidence).

The $\tfrac{1}{3}$ weight is set by survey design rather than fitted, but it is worth asking what happens if it is wrong. Write $\alpha_1$ for the true year-one noise share. Misspecifying it biases the joint value in a known direction: if the true share is below $\tfrac{1}{3}$ the releases are more nearly independent than assumed and the joint value overstates the evidence ($\E[M_{\mathrm{DR2}}^{\mathrm{joint}} \mid H_0] = 162$ at $\alpha_1 = 0.25$, growing without bound as full independence is approached), while a true share above $\tfrac{1}{3}$ makes the joint value conservative ($0.38$ at $\alpha_1 = 0.40$). The rejection survives either way: re-running the construction across assumed shares $\alpha_1 \in [0.25, 0.40]$ gives $M^{\mathrm{joint}} \in [65, 131]$, with the survey-design value near the minimum of that family. Shares above $0.470$ are inadmissible rather than merely wrong, since the decomposition requires $\mathrm{cov}(\varepsilon_{y23})$ to be positive semi-definite and it degenerates as that limit is approached.

One assumption remains, that the mean $\mu$ is shared across releases. It holds here because DR2 \emph{adds} exposures to the DR1 sky rather than reprocessing it (six matched bins, same quantity, $|\Delta z|/z < 1\%$); a re-reduction would make the decomposition approximate. Note finally that the running-supremum tail under this joint law is indistinguishable from the independent-draws case ($1.2$ versus $1.3 \times 10^{-3}$ at threshold $20$). The shared-year correlation is not claimed to tighten the bound; the margin under Ville's $\alpha = 0.05$ is simply large.

\subsection{Planck-fixed background cosmology justification}
\label{app:fixed-planck}

We fix $\{h, \Omega_m, r_d\}$ to their Planck 2018 values rather than marginalising over them, as is standard for BAO-only analyses. BAO summary statistics depend on $\{h, \Omega_m, r_d\}$ and $\{w_0, w_a\}$; the CMB tightly constrains the former independently of dark-energy dynamics (the remaining \LCDM{} parameters $\Omega_b h^2, n_s, A_s, \tau$ enter only through $r_d$), so $w_0w_a$CDM adds $k = 2$ parameters over \LCDM. BAO alone cannot constrain $\{h, \Omega_m, r_d\}$ jointly with $\{w_0, w_a\}$ without informative CMB priors, and marginalising would require Boltzmann-code computations beyond the closed-form distance integrals used here. The cost is the implicit assumption that Planck $h \approx 0.677$ is correct.

The assumption extends beyond $h$. The triple used in the body mixes Planck 2018 columns: $(h, \Omega_m) = (0.6766, 0.3111)$ matches the +lensing+BAO chain, whose self-consistent sound horizon is $r_d = 147.21 \pm 0.23$~Mpc, while $r_d = 147.05$~Mpc matches the CMB-only columns. Because a coherent shift in $(h, \Omega_m, r_d)$ moves all 13 predicted $D/r_d$ ratios together, the e-values depend strongly on the choice. Table~\ref{tab:background-sensitivity} re-runs the full pipeline for the mixed triple and for three self-consistent Planck columns.

\begin{table}[t]
\centering
\caption{Sensitivity of the running mixture to the fixed background $(h, \Omega_m, r_d/\mathrm{Mpc})$, Default prior throughout. Row (a) is the baseline used in the body; rows (b)--(d) are self-consistent Planck 2018 columns.}
\label{tab:background-sensitivity}
\small\setlength{\tabcolsep}{4pt}
\begin{tabular}{@{}lcccc@{}}
\toprule
Background & $M_{\mathrm{DR1}}$ & $M_{\mathrm{DR2}}$ & $M_{\mathrm{DR2}}^{\mathrm{no\,LRG2}}$ & Reject @ $\alpha{=}0.05$ \\
\midrule
(a) this work $(0.6766, 0.3111, 147.05)$ & 1.05 & 33.97 & 0.49 & T \\
(b) TT,TE,EE+lowE+lensing $(0.6736, 0.3153, 147.09)$ & 2.89 & 381.7 & 2.37 & T \\
(c) +lensing+BAO $(0.6766, 0.3111, 147.21)$ & 0.65 & 8.70 & 0.22 & F \\
(d) TT,TE,EE+lowE $(0.6727, 0.3166, 147.05)$ & 4.80 & 1417 & 5.55 & T \\
\bottomrule
\end{tabular}
\end{table}

The running mixture spans a factor of $163$ across these four backgrounds. Under the self-consistent +lensing+BAO column (c) it reaches only $8.70$ and does not reject, while under the CMB-only columns (b) and (d) it rejects far more strongly than the baseline. The mechanism is the known CMB--BAO tension in $\Omega_m$. DESI DR2 BAO alone prefers $\Omega_m \approx 0.297$ under \LCDM, so a higher-$\Omega_m$ CMB-only background leaves a coherent residual across the 13 measurements that the $(w_0, w_a)$ family is well placed to absorb, whereas the +BAO column has already absorbed part of that tension into the background itself, at the price of calibrating the null on BAO data of the kind under test. A fixed-background e-value therefore tests \LCDM\ \emph{and} the chosen background jointly, and part of what it measures is the background tension rather than $(w_0, w_a)$ evolution alone.

Three results hold in every column. The fixed-background MLE stays in the $w_0 > -1$, $w_a < 0$ quadrant, DR1 alone never rejects ($M_{\mathrm{DR1}} \leq 4.8$), and no column rejects once LRG2 is removed, though the collapse is milder under the CMB-only backgrounds ($2.37$ and $5.55$) than the baseline's $0.49$. The body quotes column (a) throughout as the pre-specified baseline, but the spread is wide enough that the headline value should be read as specification-dependent in this respect too, alongside its dependence on the prior (Section~\ref{sec:prior-sens}) and on LRG2 (Section~\ref{sec:loo}).

\subsection{Validity and calibration of the leave-one-out decomposition}
\label{app:loo-decomp}

Why the two aggregate constructions differ in validity is established in the note to Proposition~\ref{prop:loo}. The LOO folds share training data, so their product carries no guarantee and only the average $\bar E^{\mathrm{LOO}}$ is reported, whereas the per-bin-independent product $\prod_k M_k^{\mathrm{REGROW},\delta=2}$ of Section~\ref{sec:loo} is valid because its per-bin mixtures use a fixed, data-independent prior and DESI's published covariance is genuinely block-diagonal across bins. Under the Default flat prior the same constructions give $\prod_k M_k = 1.8 \times 10^{-3}$ and $\bar M_k = 3.5$, with only LRG2 individually above one ($M_{\mathrm{LRG2}} = 22.8$), the wide per-bin prior paying the Bartlett penalty of Section~\ref{sec:prior-sens} in every bin separately.

Interpreting the LRG2-dominated decomposition of Section~\ref{sec:loo} as a localised feature assumes that a genuinely smooth $w(z)$ would spread its evidence across the redshift bins. We test that assumption with a parametric bootstrap. Synthetic DR2 vectors, each the 13 predicted observables at the fixed background of Section~\ref{sec:methods} plus Gaussian noise drawn with the published $13\times13$ covariance, are generated under two truths, $w_0w_a$CDM at the DR2 BAO MLE and \LCDM, and each passes through exactly the leave-one-out pipeline of Section~\ref{sec:loo}, recording each bin's share of $\sum_k E_k^{\mathrm{LOO}}$ ($500$ and $200$ realisations respectively, fixed seeds, code release).

\begin{table}[t]
\centering
\caption{Expected concentration of the leave-one-out evidence, from the parametric bootstrap described above. In the data, LRG2 carries $78.6\%$ of $\sum_k E_k^{\mathrm{LOO}}$.}
\label{tab:loo-benchmark}
\small
\begin{tabular}{@{}lcc@{}}
\toprule
 & $w_0w_a$CDM truth (MLE) & \LCDM\ truth \\
\midrule
Median max-bin LOO share & $55\%$ & $22\%$ \\
Some bin carries $\geq 75\%$ & $23\%$ of realisations & $0/200$ \\
Max-bin share $\geq$ observed $78.6\%$ & $18\%$ & $0/200$ \\
LRG2 specifically $\geq 78.6\%$ & $3.2\%$ & $0/200$ \\
Most frequent dominant bin & LRG3$+$ELG1 ($37\%$) & -- \\
\bottomrule
\end{tabular}
\end{table}

Under \LCDM\ truth the evidence never concentrates: the median max-bin share is $22\%$ and no realisation reached $75\%$ (Table~\ref{tab:loo-benchmark}), so the observed concentration is a real departure from the null. Under a smooth signal it is common, because the bins differ widely in redshift leverage and precision: one bin holds at least $75\%$ of the sum in $23\%$ of realisations, and the observed $78.6\%$ carried by LRG2 sits at the 82nd percentile of that distribution. Concentration alone therefore does not separate a localised feature from a smooth signal; which bin dominates does. Smooth signals concentrate most often in LRG3$+$ELG1 ($37\%$ of realisations, against $27\%$ for LRG2) and reproduce the observed LRG2 share in only $3.2\%$. That figure conditions on LRG2 after observing its dominance; the unconditioned rate, \emph{some} bin reaching the observed share, is $18\%$. The comparison is not biased by the bins' differing sensitivities, since every synthetic vector carries the same bin structure as the data (same number of measurements per bin, published covariance), so those asymmetries are already present in the benchmark distribution. The bootstrap does not model bin-specific systematics beyond the published Gaussian covariance, so such a systematic in LRG2 remains an open interpretation.

The decomposition being calibrated is itself an e-value construction: a single bin supplies one or two measurements, too few for Wilks asymptotics, whereas each held-out likelihood ratio is an exactly valid finite-sample e-value (Proposition~\ref{prop:loo}) that can be averaged or multiplied across independent bins with the Type-I guarantee intact.

\subsection{Minimum prior concentration \texorpdfstring{$\delta^*$}{delta*} for the Fisher-Gaussian family}
\label{app:delta-star}

For the Fisher-shaped Gaussian shell priors $\pi_\delta = \mathcal N(\theta_0,\, \delta^2 F^{-1})$ centred at \LCDM, the minimum concentration of Section~\ref{sec:prior-sens} has a closed form. Under the local Gaussian (Wilks/Laplace) approximation with curvature $F$ at $\theta_0$ and deviation $\Delta\chi^2 = (\hat\theta_{\mathrm{MLE}} - \theta_0)^T F (\hat\theta_{\mathrm{MLE}} - \theta_0)$, the mixture e-value is
\begin{equation}
M(\delta) \;=\; \frac{\exp\!\left(\tfrac{1}{2}\,\Delta\chi^2\,\dfrac{\delta^2}{1+\delta^2}\right)}{(1+\delta^2)^{k/2}}.
\label{eq:M-delta}
\end{equation}
For DESI DR2 ($\Delta\chi^2 \approx 16.5$ here, versus $16.86$ for the exact likelihood; $k = 2$), $M(\delta) = 20$ has two roots: $\delta^*_{\mathrm{lower}} \approx 0.87$ Fisher-$\sigma$ (narrowest aligned prior reaching the threshold) and $\delta^*_{\mathrm{upper}} \approx 13.5$ (where the $(1+\delta^2)^{-k/2}$ Occam factor drives $M$ back below $20$). The maximising width is $\delta_{\mathrm{max}} = \sqrt{\Delta\chi^2/k - 1} \approx 2.7$, giving $M_{\mathrm{max}} \approx 1.7 \times 10^2$.

This window belongs to the Gaussian-shell family alone, not to the REGROW family of Table~\ref{tab:summary}. The two share the Fisher-$\delta$ parametrisation but differ in radial shape, and they disagree at the boundary $\delta = 1$, where the shell gives $M(1) \approx 30.9$ and rejects but the REGROW ellipse gives $7.6$ and does not. REGROW is the more conservative of the two because it places all its mass at Fisher distance $\delta$ and pays the full Occam penalty, whereas the shell spreads mass inward to where the data favour the alternative.

The upper endpoint is not physically meaningful. It marks where the Occam factor drives $M$ back across $20$, far beyond the region the data prefer, since the DR2 MLE sits at $\delta_{\mathrm{MLE}} = 4.06$; the usable shell range is therefore $\delta \in [0.87, 4]$. Within it the Caldwell-Linder thawing band of Section~\ref{sec:prior-sens} runs nearly along the major, least-constrained axis, consistent with the numerically obtained $M_{\mathrm{DR2}}^{\mathrm{thawing}} \approx 1.1 \times 10^3$.

\subsection{Limitations}
\label{app:limitations}

Fixing $\{h, \Omega_m, r_d\}$ to Planck 2018 values (App.~\ref{app:fixed-planck}) suppresses the $w_a$--$\Omega_m$ and $w_0$--$\Omega_m$ degeneracies, so the e-values reported here probe the dark-energy sector at a fixed background rather than over the full parameter space. Marginalising would pull in two directions at once: a broadened alternative dilutes the mixture, while a degeneracy-opened fit can reabsorb residual structure into $(\Omega_m, h)$ and raise $\Delta\chi^2$. The net sign is therefore not guaranteed, though we expect the fixed-background e-values to be inflated relative to the marginalised case. The concern is sharper for LRG2. Its $D_H/r_d$ residual at $z = 0.706$ is sensitive to $h$ and $\Omega_m$ through the expansion rate, so a free background could reattribute part of the anomaly from $w(z)$ to a shifted $(\Omega_m, h)$; the marginalised analysis that would disentangle the two is planned but not carried out here.

A second caveat concerns power. The $z=1$ data-split is underpowered for $w_a$ on the $z<1$ training set, where the CPL factor $(1-a)$ is small. Fitting $(w_0, w_a)$ there and scoring the held-out $z \geq 1$ set gives $E_{\mathrm{split}} = 1.43$, with a median $\approx 1.2$ even under $w_0w_a$ truth, and the result swings by two orders of magnitude depending on which side of the partition LRG2 falls on. We therefore do not rely on the single split; the within-DESI inferences rest instead on the running mixture (Section~\ref{sec:eprocess}) together with the LOO and per-bin-independent decompositions (Section~\ref{sec:loo}), all of which use all the data.

All of these results condition on the CPL form $w(a) = w_0 + w_a(1-a)$, and conclusions drawn from DESI data are known to be sensitive to that choice~\cite{Nesseris2025, Lee2026, Alam2026}; a small SN--BAO distance-duality mismatch can likewise shift the CPL inference away from \LCDM~\cite{AfrozMukherjee2026}. Extending the e-value framework to Pad{\'e}, principal-component, and free-functional parametrisations (e.g., see Ref.~\cite{Nesseris:2013bia} for an overview) would test how much of the signal survives a less restrictive description of $w(z)$.

Finally, the cross-dataset cross-prediction of Appendix~\ref{app:cross} should not be over-read. It is a pre-specified-alternative e-value, valid under Proposition~1 of~\cite{GrunwaldSafeTesting2024}, but it is not a multi-stream e-process across the SN, CMB, and BAO release histories. The natural construction multiplies $M^{\mathrm{DESI}}_t \cdot M^{\mathrm{SN}}_t \cdot M^{\mathrm{CMB}}_t$ across each survey's own release filtration, with the shared Planck-fixed background handled consistently in the constituent likelihoods. We leave this for follow-up, since it requires the full SN and CMB likelihoods rather than the published joint-posterior summaries used here.

\subsection{Joint BAO+CMB e-values: methodological details}
\label{app:joint-cmb}

To compare with Ong et al.~\citep{OngBayesian2025}'s DESI BAO + Planck CMB combination we add the compressed Planck 2018 likelihood of~\citep{Chen2019Compressed}, using its $r_s(z_*) = 144.65$~Mpc (distinct from the BAO drag-epoch $r_d = 147.05$~Mpc). This compressed statistic approximates the full Planck likelihood of~\citep{OngBayesian2025}, so we do not reproduce their absolute Bayes-factor magnitude. The joint MLE is $(w_0, w_a) = (-0.818, -0.695)$ with $\Delta\chi^2 = 14.78$ ($\sigma = 3.42$ Wilks), matching DESI's $\sim 3.1\sigma$ to within the fixed-versus-marginalised early-universe difference. The valid e-value summaries are
\begin{equation}
\bar{E}_{\mathrm{LOO}}^{\mathrm{BAO+CMB}} = 7.07, \qquad
E_{\mathrm{mix,Default}}^{\mathrm{BAO+CMB}} = 2.19.
\label{eq:joint-evalues}
\end{equation}

The prior-width sensitivity of the mixture e-value reproduces the reduction found by Ong et al.\ (Table~\ref{tab:joint-cmb-priors}).

\begin{table}[h]
\centering
\caption{Joint DESI BAO $+$ compressed Planck CMB mixture e-value under the four box priors of Table~\ref{tab:summary}. No prior reaches the rejection threshold, and the value falls monotonically with prior width.}
\label{tab:joint-cmb-priors}
\begin{tabular}{@{}ll@{}}
\toprule
Prior range & Joint BAO+CMB $E_{\mathrm{mix}}$ \\
\midrule
Narrow $w_0 \in [-1.0, -0.5], w_a \in [-1.5, 0.0]$  & $7.2$ \\
Default $w_0 \in [-1.5, -0.5], w_a \in [-2.0, 1.0]$ & $2.19$ \\
Wide $w_0 \in [-2.0, 0.0], w_a \in [-3.0, 2.0]$     & $0.91$ \\
Ong $w_0 \in [-3, 1], w_a \in [-3, 2]$              & $0.091$ \\
\bottomrule
\end{tabular}
\end{table}

At their wide priors the direction agrees with their $\ln\mathcal{B} = -0.57$. The e-value is not less conservative than the Bayes factor in any absolute sense; it makes the prior-width dependence explicit as a continuous family rather than a single number.

\subsection{Cross-dataset cross-prediction}
\label{app:cross}
Each SN+CMB joint MAP is a pre-specified $(w_0, w_a)$ point alternative against DESI BAO, so its likelihood ratio is a valid e-value (Proposition~1 of~\citep{GrunwaldSafeTesting2024}). Table~\ref{tab:cross} reports the point $E$ at each MAP and the posterior-mean $E$ over each compilation's full posterior (also valid by linearity). Because $E$ is exponential in $\Delta\chi^2$, a $\sim 1\sigma$ MAP shift swings the point $E$ by orders of magnitude; the posterior-mean column is therefore the more stable measure. The three compilations agree with DESI BAO at moderate evidence in $w_0w_a$CDM, consistent with~\citep{OngBayesian2025, WangMota2025, ColgainEvolved2025}, though their MAPs differ at $\sim 1\sigma$ in $(w_0, w_a)$.

\begin{table}[t]
\centering
\caption{Cross-prediction e-values from the three SN+CMB compilations against DR2 BAO. Both columns are valid e-values, answering different questions. \textit{Point $E$}: the e-value at each compilation's single published best-fit $(w_0, w_a)$, valid by Proposition~1 of~\citep{GrunwaldSafeTesting2024}. \textit{Posterior-mean $E$}: the e-value averaged over each compilation's full posterior, valid by linearity of expectation.}
\label{tab:cross}
\begin{tabular}{@{}lcccc@{}}
\toprule
SN+CMB compilation & $(w_0, w_a)$ MAP & Point $E$ & Posterior-mean $E$ & Source \\
\midrule
Pantheon+ + CMB    & $(-0.851, -0.70)$  & 9.6  & 342 & \citep{BroutPantheon2022}\\
DES-Y5 + CMB       & $(-0.730, -1.17)$  & 47   & 240 & \citep{DESY5_2024}\\
Union3 + CMB       & $(-0.699, -1.05)$  & 44   & 128 & \citep{RubinUnion3_2023}\\
\bottomrule
\end{tabular}
\end{table}

\acknowledgments

We thank Peter Gr\"unwald (CWI Amsterdam) for detailed feedback on the e-value methodology and Ruodu Wang (University of Waterloo) for confirming the validity of the leave-one-out averaging argument. We thank Eoin~{\'O}~Colg{\'a}in, Ioannis~D.~Gialamas, Tian-Nuo~Li and Suvodip~Mukherjee for comments on the first arXiv version. DFM and AT thank the Research Council of Norway for their support and the resources provided by UNINETT Sigma2, the National Infrastructure for High-Performance Computing and Data Storage in Norway. AT is also supported by the European Union's Horizon Europe research and innovation programme under the Marie Sk\l odowska-Curie grant agreement No. 101126636.
The authors used Claude Opus 4.8 (Anthropic) for language editing, drafting assistance, and analysis-code review during manuscript preparation. The authors take full responsibility for the scientific content. We also thank the DESI collaboration for the public release of the DR2 BAO data products used in this analysis.

\section*{Data and Code Availability}

Analysis code and data files are available at:
\begin{center}
\url{https://github.com/jinyoungkim927/desi-evalue-paper}.
\end{center}

\noindent DESI BAO data was taken from: \url{https://github.com/CobayaSampler/bao_data}.

\bibliographystyle{JHEP}
\bibliography{refs}

\end{document}